\documentclass[11pt]{article}


    \setcounter{topnumber}{2}
    \setcounter{bottomnumber}{2}
    \setcounter{totalnumber}{4}     
    \setcounter{dbltopnumber}{2}    

\usepackage{amsfonts,amsmath, amsthm}
\usepackage{amssymb}
\usepackage{paralist}
\usepackage{multirow}
\usepackage{xspace}
\usepackage{graphicx}
\graphicspath{ {.} }
\usepackage{tikz}
\usepackage[textsize=scriptsize]{todonotes}
\usepackage[noend,ruled]{algorithm2e}
\usepackage{color}
\usepackage{comment}
\usepackage{float}
\usepackage{tikz}
\usepackage{mathtools}
\usetikzlibrary{positioning}
\usepackage{bbm}
\usepackage{booktabs}
\usepackage[]{units}
\pagestyle{plain}
\usepackage{pgfplots}
\usetikzlibrary{calc,pgfplots.groupplots}

\textheight=22.50truecm
\textwidth=16.5truecm
\hoffset=-1.80truecm
\voffset=-2truecm
\usepackage[colorlinks,citecolor=green!60!black,linkcolor=red!60!black]{hyperref}

\newcommand{\reals}{{\mathbb{R}}}
\newcommand{\naturals}{{\mathbb{N}}}

\newtheorem{definition}{Definition}

\newtheorem{theorem}{Theorem}
\newtheorem{proposition}[theorem]{Proposition}
\newtheorem{lemma}[theorem]{Lemma}
\newtheorem{corollary}{Corollary}

\newtheorem{claim}{Claim}
\newtheorem{remark}{Remark}
\newtheorem{example}{Example}

\newcommand{\calR}{\mathcal{R}}

\newcommand{\np}{{\mathrm{NP}}}
\newcommand{\conp}{{\mathrm{coNP}}}

\newcommand{\fpt}{{\mathrm{FPT}}}

\newcommand{\wone}{{\mathsf{W[1]}}}
\newcommand{\wtwo}{{\mathsf{W[2]}}}

\newcommand{\p}{{\mathrm{P}}}

\newcommand{\score}[1]{#1\hbox{-}\mathrm{score}}
\newcommand{\pos}{\mathrm{pos}}
\newcommand{\rank}{\mathrm{rank}}
\newcommand{\reppos}{\mathrm{reppos}}

\newcommand{\bordacc}{{\beta\hbox{-}\mathrm{CC}}}

\newcommand{\topc}{{\mathrm{top}}}
\newcommand{\shift}{{\mathrm{shift}}}

\newcommand{\pref}{\ensuremath{\succ}}

\newenvironment{profile}[1]{
\begin{center}
$\begin{array}{#1}
}{ 
\end{array}$
\end{center}
}

\title{Robustness Among Multiwinner Voting Rules\footnote{A preliminary version
 of this article appeared in \emph{Proceedings of the 10th International
 Symposium on Algorithmic Game Theory, SAGT 2017}~\cite{BFKNST17}.}}

\author{
  Robert Bredereck\thanks{Work done in part while Robert Bredereck was at the University of Oxford.}\\
  TU Berlin\\
  Berlin, Germany\\
  \small{\texttt{robert.bredereck@tu-berlin.de}}
  \and
  Piotr Faliszewski\\
  AGH University\\
  Krakow, Poland \\
  \small{\texttt{faliszew@agh.edu.pl}}
  \and
  Andrzej Kaczmarczyk\\
  TU Berlin\\
  Berlin, Germany\\
  \small{\texttt{a.kaczmarczyk@tu-berlin.de}}
  \and
  Rolf Niedermeier\\
  TU Berlin\\
  Berlin, Germany\\
  \small{\texttt{rolf.niedermeier@tu-berlin.de}}
  \and
  Piotr Skowron\thanks{Work done in part while Piotr Skowron was at TU Berlin.}\\
  University of Warsaw\\
  Warsaw, Poland\\
  \small{\texttt{p.k.skowron@gmail.com}}
  \and
  Nimrod Talmon\thanks{Work done in part while Nimrod Talmon was at the Weizmann Institute of Science.}\\
  Ben-Gurion Univeristy\\
  Be'er Sheva, Israel\\
  \small{\texttt{talmonn@bgu.ac.il}}
}

\begin{document}

\sloppy

\maketitle
\thispagestyle{plain}

\begin{abstract}
%
  We investigate how robust the results of committee elections are to small
  changes in the input preference orders, depending on the voting
  rules used. 
  We find that for typical rules the effect of making a single swap of
  adjacent candidates in a single preference order is either that (1)
  at most one committee member might be replaced, or (2) it is possible
  that the whole committee will be replaced.
%
%
  We also show that the problem of computing the smallest number of
  swaps that lead to changing the election outcome is typically
  $\np$-hard, but there are natural $\fpt$ algorithms. Finally, for a
  number of rules we assess experimentally the average number of random
  swaps necessary to change the election result.

\end{abstract}

\section{Introduction}

We study how multiwinner voting rules---that is, procedures used to select
fixed-size committees of candidates---react to (small) changes in the
input votes.
We are interested both in the complexity of computing the smallest
modification of the votes that affects the election outcome and in the
extent of the possible changes.
We start by discussing our ideas informally in the following example.

Consider a research-funding agency that needs to choose which of the
submitted project proposals to support.
The agency asks a group of experts to evaluate the proposals and to
rank them from the best to the worst one. Then, the agency uses some
formal process---here modeled as a multiwinner voting rule---to
aggregate these rankings and to select $k$ projects 
to be funded.
Let us imagine that one of the experts realized that, instead
of ranking some proposal~$A$ as better than~$B$, he or she should have
given the opposite opinion.
What are the consequences of such a ``mistake'' of the expert? 
It may not affect the results at all, or it may cause
only a 
minor change: 
Perhaps proposal~$A$ would be dropped (to the
benefit of~$B$ or some other proposal) or $B$ would be selected (at
the expense of~$A$ or some other proposal).
We show that, while this indeed would be the case under a number of
multiwinner voting rules (e.g.,\,under the $k$-Borda rule; see
Section~\ref{sec:prelim} for the definitions), there exist
other rules (e.g.,\,Single Transferable Vote, further referred to as STV, or the
Chamberlin--Courant rule) for which such a single swap could lead to selecting a
completely disjoint set of proposals. As the agency would
prefer to avoid situations where small changes in the 
experts'
opinions lead to (possibly large) changes in the outcomes,
the agency would want to be able to compute the smallest number of swaps that
would change the result. In cases where this number
is 
too small, the agency might invite more experts to gain 
confidence in the results.

Below we provide a slightly more formal introduction.
First, a multiwinner voting rule is a function that, given a
set of rankings of the candidates and an integer $k$, outputs a
family of size-$k$ subsets of the candidates (the winning
committees). We consider the following three issues (for simplicity,
below we ignore ties and assume to always have a unique winning committee):



\begin{enumerate}

\item We say that a multiwinner rule $\calR$ is \emph{$\ell$-robust}
  if (1) swapping two adjacent candidates in a single vote can lead to
  replacing no more than $\ell$ candidates in the winning
  committee,\footnote{The formal definition is more complex due to the possibility of ties.} and (2) there are examples where exactly
  $\ell$ candidates are indeed replaced; we refer to $\ell$ as the
  \emph{robustness level} of $\calR$. The robustness level is
  between $1$ and $k$, with $1$-robustness being the strongest form of
  robustness one could ask for. We investigate the robustness levels of
  several multiwinner rules.

\item We say that the \emph{robustness radius} of an election $E$ (for
  committee size~$k$) under a multiwinner rule $\calR$ is the smallest
  number of swaps of adjacent candidates which are necessary to change
  the election outcome. We ask for the complexity of computing the
  robustness radius (referred to as the \textsc{Robustness Radius}
  problem) under a number of multiwinner rules. This problem is
  strongly related to the \textsc{Margin of
    Victory}~\cite{mag-riv-she-wag:c:stv-bribery,car:c:margin-of-victory,xia:margin-of-victory,blom2017towards}
  and \textsc{Destructive Swap Bribery}
  problems~\cite{elk-fal-sli:c:swap-bribery,shi-yu-elk:c:robustness}.
  Furthermore, our work follows up on the study of Shiryaev et
  al.~\cite{shi-yu-elk:c:robustness}, who considered the robustness of
  single-winner rules.

\item In addition to the above-described contributions, we ask how
  many random swaps of adjacent candidates are necessary, on
  average, 
  to move from a randomly generated election
  to one with a different outcome. 
  We assess this kind of robustness of our rules experimentally.

\end{enumerate}

There is quite a number of multiwinner rules. We consider only
several of them, selected to represent a varied set of ideas from the
literature, ranging from variants of scoring rules, through rules
inspired by the Condorcet criterion, to the elimination-based STV
rule.
We find that all these rules 
are either $1$-robust---so a single swap can replace at most one
committee member---or are $k$-robust---so a single swap can replace the
whole committee of size~$k$.\footnote{We also construct somewhat
  artificial rules with robustness levels between $1$ and $k$.} 
Somewhat surprisingly, 
this phenomenon is deeply connected to the complexity of winner
determination. Specifically, under mild assumptions we show that if a
rule has a constant robustness level, then it has a polynomial-time
computable refinement (that is, it is possible to compute \emph{one}
of its outcomes in polynomial time). Since for many rules the problem
of computing such a refinement is $\np$-hard, we get a quick way of
finding out that such rules have nonconstant robustness levels.

%
%
%

The \textsc{Robustness Radius} problem tends to be $\np$-hard
(sometimes even for a single swap) and, thus,
we seek fixed-parameter tractability (FPT)
results. 
For example, we find several FPT algorithms parameterized by the number
of voters (these algorithms are useful, e.g., for scenarios with few
experts, 
such as in our introductory example). See Table~\ref{tab:results} for
an overview of our theoretical results.  We mention that Misra and
Sonar~\cite{mis-son:c:robustness} followed up on our results and, in
particular, have considered several variants of the
Chamberlin--Courant rule and certain nearly-structured preference
domains. Recently, Gawron and Faliszewski~\cite{gaw-fal:c:robustness} applied
our notions of robustness to the case of approval elections.


We furthermore perform an experimental evaluation of the robustness of
our rules with respect to random swaps. We conclude that, on average, to change the
outcome of an election, one needs to make the most swaps under the
$k$-Borda rule, whereas STV and SNTV (Single Non-Transferable Vote) require
fewest swaps to achieve this result.

The paper is organized as follows. In Section~\ref{sec:prelim} we
provide the necessary background definitions, including the
definitions of the rules that we focus on. In
Section~\ref{sec:robustness_level} we introduce the robustness level
notion and determine robustness level values for our rules.
In Section~\ref{sec:computingRefinements} we link low robustness level
values with the ability to compute refinements of multiwinner rules.
Then, in Sections~\ref{sec:complexity} and~\ref{sec:fpt-algo}, we
introduce the \textsc{Robustness Radius} problem and study its
computational complexity; in the former section we mostly focus on the classic
complexity, whereas in the latter we provide several $\fpt$
algorithms. In Section~\ref{sec:exp-eval} we describe our
experiments. We conclude in Section~\ref{sec:conclusions}.




\begin{table}[t]
\begin{center}
\begin{tabular}{l|c|c}
  \toprule
  Voting Rule              & Robustness Level & Complexity of \textsc{Robustness Radius} \\
  \midrule
  SNTV, Bloc, $k$-Borda ($\p$)             & $1$ & $\p$ \\
  \midrule
  $k$-Copeland ($\p$)      & $1$ & $\np$-hard, FPT($m$), W[1]-hard($n$) \\
  NED ($\np$-hard~\cite{azi-elk-fal-lac-sko:c:multiwinner-condorcet})         & $k$ & $\np$-hard, FPT($m$), W[1]-hard($n$) \\
  \midrule
  STV ($\np$-hard\footnotemark~\cite{con-rog-xia:c:mle})   & $k$ &
                                                                   $\np$-hard(B), FPT($m$), FPT($n$) \\
  $\bordacc$
  ($\np$-hard~\cite{pro-ros-zoh:j:proportional-representation,bou-lu:c:chamberlin-courant,bet-sli-uhl:j:mon-cc})
                           & $k$ & $\np$-hard(B), FPT($m$), FPT($n$) \\
  \bottomrule
\end{tabular}
\end{center}
\caption{\label{tab:results}Summary of our results. For each rule, we provide 
  the complexity of its winner determination.
  The parameters $m$, $n$, and $B$ mean, respectively, the number of candidates, the number of
  voters, and the robustness radius; $\np$-hard($B$) means $\np$-hard even for constant~$B$.}
\end{table}
\footnotetext{For STV there is a polynomial-time algorithm for computing a
single winning committee, but deciding whether a given committee wins is $\np$-hard.}

\section{Preliminaries}\label{sec:prelim}

In this section we describe our model of elections and the voting
rules that we focus on. We assume familiarity with classic and
parameterized computational complexity theory, but we briefly recall the essential
notions from the latter. For each positive integer $m$, we write $[m]$
to denote the set $\{1, \ldots, m\}$.

\paragraph{Elections.}
An election $E = (C,V)$ consists of a set of candidates
$C = \{c_1, \ldots, c_m\}$ and of a collection of voters
$V = (v_1, \ldots, v_n)$. We consider the ordinal election model,
where each voter $v$ is associated with a preference order
$\succ_{v}$, that is, with a ranking of the candidates from the most
to the least desirable one (according to this voter); we sometimes
refer to preference orders as to votes. A multiwinner voting rule
$\calR$ is a function that, given an election $E = (C,V)$ and a
committee size $k$, outputs a set $\calR(E,k)$ of size-$k$ subsets of
$C$, referred to as the winning committees (each of these committees
ties for victory).

\begin{remark}\label{rem:notation}
  Sometimes when we specify a preference order, we write $A \succ B$
  to denote the fact that each candidate in the set $A$ is preferred
  to each candidate in the set $B$, but the particular order of the
  candidates within these sets is irrelevant for the discussion.
\end{remark}

\paragraph{(Committee) Scoring Rules.}
Given a voter $v$ and a candidate $c$, by $\pos_v(c)$ we denote the
position of $c$ in $v$'s preference order (the top-ranked candidate
has position $1$, the following candidate has position $2$, and so
on). A scoring function 
for $m$ candidates is a function
$\gamma_m \colon [m] \rightarrow \reals$ that associates each
candidate-position with a score. 
Examples of scoring functions include (1) the Borda scoring functions,
$\beta_m(i) = m-i$; and (2) the $t$-Approval scoring functions,
$\alpha_t(i)$, defined such that $\alpha_t(i) = 1$ if $i \leq t$ and
$\alpha_t(i)=0$ otherwise ($\alpha_1$ is typically referred to as the
Plurality scoring function). For a scoring function $\gamma_m$, the
$\gamma_m$-score of a candidate $c$ in an $m$-candidate election
$E = (C,V)$ is defined as
$ \score{\gamma_m}_E(c) = \sum_{v \in V}\gamma_m(\pos_v(c)).$

For a given election $E$ and a committee size $k$, the SNTV score of a
size-$k$ committee $S$ is defined as the sum of the Plurality scores
of its members. The SNTV rule outputs the committee(s) with the
highest score (i.e., the rule outputs the committees that consist of
$k$ candidates with the highest plurality scores; 
there may be more than one such committee due to ties).
%
 Bloc and $k$-Borda
rules are defined analogously, but using $k$-Approval and Borda scoring functions,
respectively.
The Chamberlin--Courant rule~\cite{cha-cou:j:cc}
(abbreviated as $\bordacc$, where $\beta$~indicates the Borda scoring function)
also outputs the committees with the highest score, but computes these
scores in a different way: The score of a committee~$S$ in a
vote~$v$ is the Borda score of the highest-ranked member of~$S$, and the
score of a committee in an election is the sum of the scores that it obtains
in all votes.
Given a committee $S$ and a voter $v$, we refer to the
member of $S$ that $v$ ranks highest as his or her representative (in
this committee).

\begin{remark}\label{rem:dissat}
  In a couple of proofs, we use the concept of the
  \emph{dissatisfaction score} that the voters associate with a
  committee according to the $\bordacc$ rule. The dissatisfaction
  score of a voter $v$ for a committee $S$ is equal to $(m-1)$ minus
  the Borda score of the most preferred member of $S$ (according to
  $v$). For example, if~$S$ contains $v$'s top preferred candidate,
  then the dissatisfaction score of $v$ from $S$ is equal to zero; if
  $S$ contains $v$'s second most preferred candidate, then such
  dissatisfaction score is equal to one, and so on. The total
  dissatisfaction score of a committee $S$ is the sum of the
  dissatisfaction scores that the individual voters assign to it.
\end{remark}

SNTV, Bloc, $k$-Borda, and $\bordacc$ are examples of committee
scoring
rules~\cite{elk-fal-sko-sli:j:multiwinner-properties,fal-sko-sli-tal-tal:j:hierarchy-committee,sko-fal-sli:j:axiomatic-committee}.
However, while the first three rules are 
polynomial-time computable, winner determination for $\bordacc$ is~well-known to be
$\np$-hard~\cite{pro-ros-zoh:j:proportional-representation,bou-lu:c:chamberlin-courant}
and $\wtwo$-hard when parameterized by the committee
size~\cite{bet-sli-uhl:j:mon-cc}. Yet, there are many ways of dealing
with this negative result, including $\fpt$-algorithms for other
parameters~\cite{bet-sli-uhl:j:mon-cc}, approximation
algorithms~\cite{bou-lu:c:chamberlin-courant,sko-fal-sli:j:multiwinner},
algorithms for restricted
domains~\cite{bet-sli-uhl:j:mon-cc,sko-yu-fal-elk:j:sc-cc,pet:c:total-unimodularity},
and heuristics~\cite{fal-sli-sta-tal:j:clustering}.

\paragraph{Condorcet-Inspired Rules.}
A candidate $c$ is a Condorcet winner (resp.\ a weak Condorcet winner) if for
each other candidate~$d$, more than (at least) half of the voters prefer $c$ to~$d$.
In the multiwinner case,
a committee is \emph{Gehrlein strongly-stable} (resp.\ \emph{weakly-stable})
if every committee member is preferred to every nonmember by more
than (at least) half of the voters~\cite{geh:j:multiwinner-condorcet},
%
%
and a multiwinner rule is Gehrlein strongly-stable (resp.\ weakly-stable) if
it outputs exactly the Gehrlein strongly-stable (weakly-stable)
committees whenever they exist. For example, let the NED (Number of External
Defeats) score of a committee~$S$ be the number of pairs $(c,d)$ such that (i)
$c$ is a candidate in~$S$, (ii) $d$ is a candidate outside of $S$, and (iii) at
least half of the voters prefer $c$ to $d$. Then, the NED
rule~\cite{coelho:thesis:understanding}, defined to output the
committees with the highest NED score\footnote{Originally, the definition of the
NED rule~\cite{coelho:thesis:understanding} used a ``dual'' definition of the
NED score, and thus it was choosing committees whose NED score was the
smallest.}, is Gehrlein weakly-stable. In contrast, the $k$-Copeland$^0$ rule is
Gehrlein strongly-stable
but not weakly-stable (the Copeland$^\alpha$ score of a candidate $c$,
where $\alpha \in [0,1]$, is the number of 
candidates $d$ such that a majority of the voters prefer $c$ to $d$,
plus $\alpha$ times the number of candidates $e$ such that exactly
half of the voters prefer $c$ to $e$; winning $k$-Copeland$^\alpha$
committees consist of $k$ candidates with the highest scores).
%
%
%
%
Detailed studies of Gehrlein stability mostly focused on the weak
variant of the
notion~\cite{bar-coe:j:non-controversial-k-names,kam:j:stable-rules-again}.
Some recent findings, as well as results from this paper, suggest that
the strong variant is more
appealing~\cite{azi-elk-fal-lac-sko:c:multiwinner-condorcet,sek-sik-xia:c:bundling-condorcet}.
For example, all Gehrlein weakly-stable rules are $\np$-hard to
compute~\cite{azi-elk-fal-lac-sko:c:multiwinner-condorcet}, whereas
there are strongly-stable rules (such as $k$-Copeland$^0$) that are
Polynomial-time computable. (However, we mention that there are
approximation algorithms for some Gehrlein weakly-stable
rules~\cite{sek-sik-xia:c:bundling-condorcet}.)


\paragraph{Single Transferable Vote (STV).}
Let $E = (C,V)$ be an election with $m$~candidates and $n$~voters.  To
select a committee of size $k$, the STV rule proceeds as follows.
First, it computes the quota value $q$; in our case we use the Droop
quota~\cite{Droop1881} $q = \lfloor\frac{n}{k+1}\rfloor+1$. Then it executes up to~$m$ rounds as
follows.
In a single round, it checks whether there is a candidate $c$ who is
ranked first by at least $q$ voters and, if so, then it (i) includes
$c$ into the winning committee, (ii) removes exactly $q$ voters that
rank $c$ first from the election, and (iii) removes $c$ from the
remaining preference orders. If such a candidate does not exist, then
a candidate $d$ that is ranked first by the fewest voters is
removed. Note that this description does not specify \emph{which} $q$
voters to remove or \emph{which} candidate to remove if there is more
than one that is ranked first by the fewest voters. We adopt the
parallel-universes tie-breaking model and we say that a committee wins
under STV if there is any way of breaking such internal ties that
leads to the committee being elected~\cite{con-rog-xia:c:mle}.
 
We can compute \emph{some} STV winning committee by breaking the
internal ties in some arbitrary way, but it is $\np$-hard to decide if
a given committee wins~\cite{con-rog-xia:c:mle}.


\paragraph{Parametrized Complexity.}

%
A parameterized problem is a standard decision problem
where in addition to the problem instance $I$ we also distinguish a
parameter value~$\rho$ (in our problems a typical parameter would be the
number of candidates or the number of voters). An $\fpt$ algorithm for
a parameterized problem is an algorithm that runs in~$f(\rho)|I|^{O(1)}$ time,
where $f$~is some computable function. That is, an
$\fpt$ algorithm can run in exponential time, provided that the
exponential part of the running time depends on the parameter value
only.

The existence of an $\fpt$-algorithm means that, from the parameterized
complexity point of view, the problem is tractable (with respect to a given
parameter). There is also a theory of hardness of parameterized
problems that includes the notion of $\wone$-hardness. If a problem is
$\wone$-hard for a given parameter, then it is widely believed that there is no
$\fpt$-algorithm for the same parameter.
The typical approach to showing that a certain parameterized problem is
$\wone$-hard is to reduce to it a known $\wone$-hard problem, using
the notion of a parameterized reduction. In our case, instead of using
the full power of parameterized reductions, we use standard many-one
reductions that ensure that the value of the parameter in the output
instance is upper-bounded by a function of the parameter of the input
instance.

For more details on parameterized complexity, we point the readers to the
textbooks of Cygan et al.~\cite{CyganFKLMPPS15}, Downey and Fellows~\cite{DF13},
Flum and Grohe~\cite{FG06}, and Niedermeier~\cite{nie:b:invitation-fpt}.



\section{Robustness Levels of Multiwinner
  Rules}\label{sec:robustness_level}

In this section we introduce the notion of the robustness level of a
multiwinner rule and establish its value 
for several prominent rules.  Informally speaking, the robustness level
measures the extent to which a winning committee might change after modifying a
single vote in a given election in the smallest possible way. We formalize
this intuition below (note that our definition takes into account that
a voting rule can output several tied committees).


\begin{definition}\label{def:robustness}
  The \emph{robustness level} of a multiwinner rule $\calR$ for
  committees of size $k$
  is the smallest value $\ell$ such that for each election $E = (C,V)$
  with $|C| \geq k$, each election $E'$ obtained from $E$ by making a
  single swap of adjacent candidates in a single vote, and each
  committee $W \in \calR(E,k)$, there exists a committee
  $W' \in \calR(E',k)$ such that $|W \cap W'| \geq k-\ell$.
\end{definition}

In other words, if we have an $\ell$-robust rule and $W$ is some
winning committee for election~$E$, then after swapping two adjacent
candidates in some vote in $E$ we certainly have a winning committee~$W'$ that
differs from~$W$ in at most $\ell$ candidates (and, indeed,
there are cases where these committees differ in exactly $\ell$
members).\footnote{Consequently, $k$-robustness means that the
committees may be disjoint.} Yet, one may worry what happens if for
the new election we also have some new committees, completely
unrelated to those in~$E$. To deal with this issue, it suffices to
revert the roles of~$E$ and~$E'$ in Definition~\ref{def:robustness}.
For example, if we had $\calR(E,k) = \{W\}$ and
$\calR(E',k) = \{W,W'\}$ where $W$ and $W'$ were disjoint, then
applying Definition~\ref{def:robustness} for $E$ and $E'$ would not
lead to conclusions about the robustness of our rule, but applying it
with the roles of $E$ and $E'$ reversed, and considering committee
$W'$, we would conclude that the rule is $k$-robust.


It turns out that all of the rules that we consider belong to one of
the two extremes: Either they are $1$-robust (i.e., they are very
robust) or they are $k$-robust (i.e., they are possibly very
non-robust).  We start by considering a large class of $1$-robust
rules.

\newcommand{\thmrobust}{Let $\calR{}$ be a voting rule that assigns
  points to candidates and selects those with the highest scores. If a
  single swap in an election affects the scores of at most two
  candidates (possibly decreases the score of one and possibly
  increases the score of the other), then the robustness level
  of~$\calR{}$ is equal to one.}

\begin{proposition}\label{thm:robust}
  \thmrobust
\end{proposition}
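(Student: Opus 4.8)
The plan is to prove the two inequalities $\ell \ge 1$ and $\ell \le 1$ separately, with essentially all the work in the upper bound. Write $s(c)$ and $s'(c)$ for the $\calR$-score of candidate $c$ in $E$ and in $E'$, respectively. Since $\calR$ selects the $k$ highest-scoring candidates, a size-$k$ committee $X$ is winning exactly when every member weakly beats every non-member, i.e.\ $\min_{x \in X} s(x) \ge \max_{y \notin X} s(y)$ (and analogously for $s'$); I will use this ``top-$k$'' characterization throughout. By hypothesis the single swap producing $E'$ decreases the score of exactly one candidate $a$ and increases the score of exactly one candidate $b \ne a$, so $s'(a) < s(a)$, $s'(b) > s(b)$, and $s'(c) = s(c)$ for every $c \notin \{a,b\}$. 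The bound $\ell \ge 1$ is witnessed by a small election in which two candidates are (nearly) tied for the last committee seat and a single swap tips the tie; such examples exist for all the concrete rules to which we apply the proposition. The remaining argument establishes $\ell \le 1$: for every winning committee $W$ of $E$ I must produce a winning committee $W'$ of $E'$ with $|W \cap W'| \ge k-1$.

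For the upper bound I would argue by extremality. Among all $W' \in \calR(E',k)$, fix one that maximizes $|W \cap W'|$, and suppose for contradiction that $|W \cap W'| \le k-2$; then $|W \setminus W'| = |W' \setminus W| \ge 2$. The idea is to locate an element $x \in W \setminus W'$ and an element $y \in W' \setminus W$ that both avoid $\{a,b\}$, so that $s'(x) = s(x)$ and $s'(y) = s(y)$. Applying the top-$k$ characterization to $W$ in $E$ (with $x \in W$, $y \notin W$) gives $s(x) \ge s(y)$, while applying it to $W'$ in $E'$ (with $y \in W'$, $x \notin W'$) gives $s'(y) \ge s'(x)$, that is, $s(y) \ge s(x)$. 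Hence $s(x) = s(y)$ and so $s'(x) = s'(y)$. Then $W'' := (W' \setminus \{y\}) \cup \{x\}$ is obtained from the top-$k$ set $W'$ by exchanging a member for a non-member of equal $E'$-score, so $W''$ is again a top-$k$ set, i.e.\ $W'' \in \calR(E',k)$; but $|W \cap W''| = |W \cap W'| + 1$, contradicting the choice of $W'$.

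The main obstacle is that such fixed-score witnesses $x,y$ need not exist: it can happen that every element of $W \setminus W'$ (or of $W' \setminus W$) lies in $\{a,b\}$. Since each of these sets has at least two elements, this forces $W \setminus W' = \{a,b\}$ or $W' \setminus W = \{a,b\}$, and these two configurations must be ruled out by hand, using the monotone direction of the swap. If $W \setminus W' = \{a,b\}$, then $b \in W$ but $b \notin W'$; any $y \in W' \setminus W$ avoids $\{a,b\}$ (since $a \in W$ and $b \notin W'$), so $s'(y) = s(y) \le s(b) < s'(b)$, contradicting $s'(y) \ge s'(b)$, which holds because $W'$ is top-$k$. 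Symmetrically, if $W' \setminus W = \{a,b\}$, then $a \notin W$ but $a \in W'$; any $x \in W \setminus W'$ avoids $\{a,b\}$ (since $a \notin W$ and $b \in W'$) and satisfies $s'(x) = s(x) \ge s(a) > s'(a)$, contradicting $s'(a) \ge s'(x)$. In every case we reach a contradiction, so $|W \cap W'| \ge k-1$, giving $\ell \le 1$ and hence $\ell = 1$.
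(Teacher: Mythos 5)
Your proof is correct, and it reaches the conclusion by a genuinely different route than the paper. The paper's proof is a direct construction: it renames the candidates so that $s(c_1)\geq\cdots\geq s(c_m)$ and $W=\{c_1,\ldots,c_k\}$, then splits into four cases according to whether the candidate whose score increases and the candidate whose score decreases lie inside or outside $W$, exhibiting in each case an explicit short list of committees---each sharing at least $k-1$ members with $W$---one of which must win in $E'$. You instead run an extremal exchange argument by contradiction: pick $W'\in\calR(E',k)$ maximizing $|W\cap W'|$, and if the overlap were at most $k-2$, either find swap-unaffected candidates $x\in W\setminus W'$ and $y\in W'\setminus W$, force $s(x)=s(y)$ from the two top-$k$ conditions, and exchange $y$ for $x$ (this step is sound: since $x\notin W'$, every member of $W'$ has $E'$-score at least $s'(x)=s'(y)$, so the exchanged set is again a top-$k$ set), or land in one of the boundary configurations $W\setminus W'=\{a,b\}$ or $W'\setminus W=\{a,b\}$, which you correctly rule out using the strict direction of the two score changes. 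The paper's analysis has the small practical advantage of naming the candidate winning committees explicitly (at most four of them), whereas your argument avoids the sorting and the case split and would extend more smoothly to swaps affecting $r$ candidates, yielding robustness level at most $r-1$ by the same exchange scheme. Two minor loose ends, neither worse than the paper's own treatment: you silently assume the swap changes exactly two scores, whereas the hypothesis also allows it to change none (then $\calR(E',k)=\calR(E,k)$ and $W'=W$ works) or only one (then the boundary configurations cannot occur and the main exchange applies directly); and the lower bound $\ell\geq 1$ is only gestured at in your write-up---though the paper omits it altogether, and it is immediate for any rule whose outcome actually depends on the votes.
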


\begin{proof}
  Let $E$ be an election, $k$ be a committee size, and $W$ be
  a committee in $\calR(E,k)$. We write~$s(c)$ to denote the individual
  $\calR$-score of a candidate $c$ in $E$. We rename the candidates so
  that (i) $s(c_1) \geq \cdots \geq s(c_m)$ and (ii)
  $W = \{c_1, \ldots, c_k\}$. Now consider an election $E'$ obtained
  from $E$ by a single swap. This swap can increase the score of at
  most one candidate, say~$c_i$, while decreasing the score of at most
  one other candidate, say $c_j$. There are four cases to consider:
  \begin{enumerate}
  \item If $i \leq k$ and $j > k$, then $W$ is still winning in $E'$.
  \item If $i \leq k$ and $j \leq k$, then either $W$ or $\{c_1,
    \ldots, c_{k + 1}\} \setminus \{c_j\}$ is a winning committee in $E'$.
  \item If $i > k$ and $j > k$, then either $W$ or $\{c_1, \ldots,
    c_{k - 1}\} \cup \{c_i\}$ is a winning committee in $E'$.
  \item If $i > k$ and $j \leq k$, then either $W$ or $\{c_1, \ldots,
    c_{k - 1}\} \cup \{c_i\}$ or $\{c_1, \ldots, c_{k + 1}\} \setminus
    \{c_j\}$ or $\{c_1, \ldots, c_{k}\}\setminus~\{c_j\} \cup
    \{c_i\}$ is a winning committee in $E'$.
  \end{enumerate}
  In each case, there is a committee $W' \in \calR(E',k)$ such that
  $|W \cap W'| \geq k-1$ and, so, $\calR$ is $1$-robust.
\end{proof}

Proposition~\ref{thm:robust} suffices to deal with four of our rules:
  SNTV, Bloc, $k$-Borda, and $k$-Copeland$^\alpha$ (for each $\alpha$).
Indeed, it applies to all (weakly) separable committee scoring rules
(i.e., rules defined analogously the our first three rules; see the
work of Elkind et al.~\cite{elk-fal-sko-sli:j:multiwinner-properties}
for a formal definition) and to many multiwinner rules that are
straightforward extensions of single-winner ones (as is the case for
$k$-Copeland$^\alpha$).

\begin{corollary}
  SNTV, Bloc, $k$-Borda, and $k$-Copeland$^\alpha$ (for each $\alpha$)
  are $1$-robust.
\end{corollary}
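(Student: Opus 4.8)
The plan is to invoke Proposition~\ref{thm:robust} for each of the four rules, so that the entire argument reduces to checking that each rule (i) assigns a score to individual candidates and selects the $k$ highest-scoring ones, and (ii) reacts to a single swap of adjacent candidates by changing the scores of at most two candidates, decreasing one and increasing the other. By the preliminaries, all four rules are already presented in exactly this ``score-the-candidates-and-take-the-top-$k$'' form: SNTV uses Plurality scores, Bloc uses $k$-Approval scores, $k$-Borda uses Borda scores, and $k$-Copeland$^\alpha$ uses Copeland$^\alpha$ scores. Hence (i) is immediate for all four, and only condition (ii) requires verification.

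For the three positional rules the verification is routine. A single swap exchanging the candidates at positions $i$ and $i+1$ in one vote alters only those two positions, and therefore can change the per-voter contribution of at most the two candidates involved. For SNTV a change occurs only when $\{i,i+1\}=\{1,2\}$, in which case the candidate leaving the top slot loses one point and the candidate entering it gains one; for Bloc a change occurs only at the boundary $\{i,i+1\}=\{k,k+1\}$, where the candidate dropping out of the top~$k$ loses a point and the one entering gains a point; for $k$-Borda every swap decreases by one the Borda score of the candidate moving down and increases by one that of the candidate moving up. In each case at most two candidates are affected, one upward and one downward, which is precisely the hypothesis of Proposition~\ref{thm:robust}.

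The one case needing care is $k$-Copeland$^\alpha$, since its scores are defined through pairwise majority comparisons rather than positions. Suppose the swap exchanges adjacent candidates $c$ (moving down) and $d$ (moving up) in a single vote. Because every other relative order in every vote is untouched, the only head-to-head contest whose outcome can change is the one between $c$ and $d$: the number of voters preferring $c$ to~$d$ decreases by one and the number preferring $d$ to~$c$ increases by one. Consequently only the scores of $c$ and $d$ can change, and the goal is to argue that $c$'s score cannot increase while $d$'s cannot decrease. The main obstacle is the ``exactly half'' threshold: when the number of voters is even, the $c$--$d$ comparison may pass through a tie, so the swap may turn a strict majority for~$c$ into a tie (costing $c$ one point but granting $d$ a credit of $\alpha$) or a tie into a strict majority for~$d$ (costing $c$ a credit of $\alpha$ but granting $d$ one point); for an odd number of voters the comparison simply flips from a majority for~$c$ to a majority for~$d$, moving one point. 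Checking these boundary transitions shows that in every case $c$'s Copeland$^\alpha$ score weakly decreases, $d$'s weakly increases, and no third candidate is affected, so Proposition~\ref{thm:robust} again applies and yields $1$-robustness for every $\alpha\in[0,1]$.
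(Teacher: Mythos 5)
Your proposal is correct and follows exactly the route the paper takes: the paper derives this corollary directly from Proposition~\ref{thm:robust}, leaving the verification of its hypotheses implicit, while you spell out that verification (including the careful tie/parity analysis for $k$-Copeland$^\alpha$, which is the only nontrivial check and which you handle correctly).
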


In contrast, Gehrlein weakly-stable rules are $k$-robust.  This
is quite interesting because for elections with odd numbers of voters,
$k$-Copeland$^\alpha$ rules output Gehrlein weakly-stable committees
whenever they exist~\cite{bar-coe:j:non-controversial-k-names}.  That
is, the non-robustness of Gehrlein weakly-stable rules can be seen as
a consequence of tie-breaking in head-to-head contests between
candidates.

\begin{proposition}\label{pro:gehrlein-level}
  Each Gehrlein weakly-stable rule is $k$-robust, where $k$ is the
  committee size.
\end{proposition}

\begin{proof}
  Consider the following election, described through its majority
  graph (in a majority graph, each candidate is a vertex and there is
  a directed arc from candidate $u$ to candidate $v$ if more than half
  of the voters prefer $u$ to $v$; the classic McGarvey's theorem says
  that each majority graph can be implemented with polynomially-many
  votes~\cite{mcg:j:election-graph}). We form an election with
  candidate set $C = A \cup B \cup \{c\}$, where $A = \{a_1, \ldots,
  a_k\}$ and $B = \{b_1, \ldots, b_k\}$, and with the following
  majority graph: The candidates in $A$ form one cycle, the candidates
  in $B$ form another cycle, and there are no other arcs (i.e., for
  all other pairs of candidates $(x,y)$ the same number of voters
  prefers $x$~to~$y$ as the other way round). We further assume that
  there is a vote, call it $v$, where $c$ is ranked directly below
  $a_1$ (McGarvey's theorem easily accommodates this need).

  In the constructed election, there are two Gehrlein weakly-stable
  committees, $A$ and $B$. To see this, note that if a Gehrlein
  weakly-stable committee contains some $a_i$, then it must also
  contain all other members of $A$ (otherwise there would be a
  candidate outside of the committee that is preferred by a majority
  of the voters to a committee member). An analogous argument holds
  for~$B$.

  If we push $c$ ahead of $a_1$ in vote $v$, then
  a majority of the voters prefers $c$ to $a_1$.
  Thus, $A$ is no longer Gehrlein weakly-stable and~$B$
  becomes the unique winning committee.
  Since (1) $A$ and~$B$ are disjoint,
  (2) $A$ is among the winning committees prior to the swap, and
  (3) $B$ is the unique winning committee after the swap,
  we have that every Gehrlein
  weakly-stable rule is $k$-robust. 
%
%
\end{proof}

We view the above result as particularly negative. The reason is that
Gehrlein weakly stable rules are meant to select groups of
individually excellent candidates, that is, groups of candidates that
perform very well on their own, independently of the other members of
the winning committee.  Such rules are useful, for example, in sport
competitions or various other contests to select
finalists~\cite{bar-coe:j:non-controversial-k-names,elk-fal-sko-sli:j:multiwinner-properties}
(for a more detailed discussion of individual excellence, diversity,
and proportionality, we point to the overview of
Faliszewski et al.~\cite{fal-sko-sli-tal:b:multiwinner-voting}). Thus, a single
swap of two adjacent candidates in a single preference order certainly should
not result in a rule declaring \emph{all} candidates
that were previously seen as ``individually best'' to no longer be
``good enough.'' On the other hand, we view the following
results---where we show that $\beta$-CC and STV are only
$k$-robust---as less negative. Indeed, $\beta$-CC aims at choosing a
diverse committee that covers the views of as many voters as possible,
whereas STV seeks a committee that represents these views
proportionally. While the fact that a single swap can replace the
whole committee seems undesirable, it is natural that candidates'
memberships in diverse/proportional committees are correlated, so
replacing one of them can lead to a cascading effect of replacing them
all.  Further, it is quite plausible that there are several disjoint
committees that achieve diversity or proportionality to nearly the
same extent (see, e.g., the experiments of Elkind et
al.~\cite{elk-fal-las-sko-sli-tal:c:2d-multiwinner}).

\begin{example}\label{ex:correlated} {To illustrate the issue of correlation between the members of
    a diverse/proportional committee, consider the following example. We have $104$
    candidates, $a$, $b$, $c$, $d$, $e_1, \ldots, e_{100}$, and four
    voters $v_1$, $v_2$, $v_3$, $v_4$ with the following preference orders:
    \begin{align*}
      &v_1 \colon a \pref c \pref e_1 \pref \cdots \pref e_{100} \pref b \pref d,&
      &v_2 \colon b \pref d \pref e_1 \pref \cdots \pref e_{100} \pref a \pref c,\\
      &v_3 \colon a \pref d \pref e_1 \pref \cdots \pref e_{100} \pref b \pref c,& 
      &v_4 \colon b \pref c \pref e_1 \pref \cdots \pref e_{100} \pref a \pref d.&
    \end{align*}
    It is natural to select committee $\{a,b\}$ as a diverse committee
    of size two because then each voter has his or her most desirable
    representative in the committee (indeed, this committee can also
    be seen as proportional). Yet, if for some reason we had to remove
    $b$ from the committee, then it might also make sense to remove
    $a$ from it and choose committee $\{c,d\}$ instead. This way each
    voter would still have a nearly perfect representative. On the
    contrary, choosing one of the committees $\{a,c\}$ or $\{a,d\}$
    would mean that one voter would rank both members of the committee
    at the two bottom positions (including the candidates
    $e_1, \ldots, e_{100}$ would also lead to a committee that is less
    desirable than $\{c,d\}$).}
\end{example}

The next two propositions build on ideas similar to those used in the proof
of Proposition~\ref{pro:gehrlein-level}, but they are targeted for their
respective rules.

\begin{proposition}\label{prop:beta-CC_chaotic}
  $\bordacc$ is $k$-robust, where $k$ is the committee size.
\end{proposition}

\begin{proof}
  We form an election with candidate set~$C:=A \cup B \cup \{x, y\}$,
  where $|A|=k-1$, $|B|=k-1$, and with $2k-1$ voters. The first voter
  has preference order
 \[
   v_1:\ x \succ y \succ A \succ B,
 \]
 while the remaining pairs of voters, one for each $i \in [k - 1]$, have
 preference orders
 \begin{align*}
   v_{2i}:\ & a_i \succ x \succ A\setminus\{a_i\} \succ B \succ y,\\
   v_{2i+1}:\ & b_i \succ y \succ A \succ B\setminus\{b_i\} \succ x.
 \end{align*}
 
 Observe that the only winning committee is $\{x\} \cup B$. To see
 this, note that $\{x\} \cup B$ has dissatisfaction score  of only~$k-1$
 (recall Remark~\ref{rem:dissat}).  Further, each voter has a
 different favorite candidate, there are $2k-1$~voters, and the
 committee size is only~$k$. Hence, $k-1$ is the lowest possible
 dissatisfaction score value. Further, each committee with dissatisfaction score
 $k-1$ must contain~$k$ of the ``favorite'' candidates from
 $\{x\} \cup A \cup B$ and every voter that is not represented by her
 favorite candidate must be represented by her second choice. Now,
 if~$x$ were not in the committee, voter~$v_1$ could not be
 represented by its second choice because
 $y \notin \{x\} \cup A \cup B$. So, $x$~belongs to each winning
 committee and $y$ does not belong to any of them.  As a consequence,
 all remaining members of the winning committee are from~$B$ since
 only voters~$v_{2i}$, $i \in [k-1]$, can be represented by their second
 choices.

 If we swap~$x$ and~$y$ in the first vote, then, following analogous
 argumentation, the unique winning committee becomes $\{y\} \cup A$.
 Finally, we mention that the construction above works for every
 committee size.
\end{proof}

\begin{proposition}\label{prop:stv-chaotic}
  STV is $k$-robust, where $k$ is the committee size.
\end{proposition}
\begin{proof}
  Let us fix the committee size $k$ and consider a set of $m = 2k$
  candidates $C:=A \cup B$, where $A = \{a_1, \ldots, a_k\}$ and
  $B = \{b_1, \ldots, b_k\}$. For each pair of candidates $a_i \in A$
  and $b_j \in B$, we form $k+1$ voters with preference order
  $a_i \pref b_j \pref \cdots$ and $k+1$ voters with preference order
  $b_j \pref a_i \pref \cdots$. Let $v$ be one of the voters with
  preference order $b_1 \pref a_1 \pref \cdots$. We modify $v$'s vote
  by swapping $b_1$ and $a_1$ and we refer to $v$ as the pivotal vote.
  Altogether, we have $n = 2k^2(k+1)$ voters, so the STV quota value
  is $q = \lfloor \frac{2k^2(k+1)}{k+1} \rfloor + 1 = 2k^2+1$.
  Initially, each candidate in $A \cup B$, except for $a_1$ and $b_1$,
  has the same Plurality score equal to $k(k+1)$, candidate $a_1$ has
  one point more, and candidate $b_1$ has one point less.

  We claim that STV chooses $A$ as the unique winning committee.
  Indeed, all the candidates have fewer Plurality points than the
  quota value, so in the first round STV removes candidate $b_1$,
  whose score is lowest. As a consequence, the scores of all
  candidates from~$A$ become~$k(k+1) + (k+1)$, whereas the scores of
  all candidates from~$B$ do not change. In the following rounds there are
  two possibilities: Either no candidate meets the quota and some
  member of $B$ is removed (in effect, the scores of all the
  candidates in $A$ increase by the same amount while the scores of
  candidates in $B$ do not change) or all the candidates in $A$ meet
  the quota (between the first round and the current one, all members
  of $A$ always have the same Plurality score and the scores of the
  candidates from~$B$ never increase). If the latter happens, then in
  the following rounds all members of $A$ are selected for the
  committee. During these rounds the scores of candidates from $B$
  increase (as members of~$A$ are removed from the election and
  included in the committee), but no member of $B$ ever obtains score
  higher than $n - qk = 2k^2(k+1) - (2k^3+k) = 2k^2-k$, which is lower
  than the quota value.

  Now, let us consider what happens when we swap $b_1$ and $a_1$ in
  the pivotal vote. As a consequence, all the candidates have the same
  score and STV eliminates some arbitrary candidate in the first
  round. If it eliminates some member of $B$, then---by the same
  reasoning as above---it chooses committee $A$. However, if it
  eliminates a member of $A$, then, by the same token, it chooses
  committee $B$. As (a) $A$ and $B$ are disjoint, (b) only $A$ is
  winning before the swap, and (c) both $A$ and $B$ are winning after
  the swap, we conclude that STV is $k$-robust.
\end{proof}

\newcommand{\propbccstv}{Both $\bordacc$ and STV are $k$-robust.}
%
%

%

So far we have only seen voting rules that are either $1$-robust or
$k$-robust. Indeed, we are not aware of any classical rule with
robustness level between these two extremes, but we conclude this section
by showing that there are hybrid multi-stage rules with arbitrary
robustness levels.
For example, the rule which first elects half of the committee as
$k$-Borda does and then the other half as $\beta$-CC does has
robustness level of roughly $k/2$ (such a rule is not completely
artificial---for example, Kocot et
al.~\cite{koc-kol-elk-fal-tal:c:multigoal} use a similar strategy for
finding committees that perform well according to both $k$-Borda and
the Chamberlin--Courant rule).


\newcommand{\propalllevels}{For each committee size $k$ and each
  $\ell \in [k]$ there is a multiwinner rule that is $\ell$-robust for
  committees of size $k$.}

\begin{proposition}\label{prop:all-levels}
  \propalllevels
\end{proposition}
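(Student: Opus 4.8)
The plan is to build a hybrid two-stage rule $\calR_\ell$, for a fixed committee size $k \ge \ell$, whose first stage is $1$-robust and whose second stage is $k$-robust on exactly $\ell$ seats, and then to show that the two stages almost never interfere, so that the overall robustness level is pinned to $\ell$. Concretely, I would define $\calR_\ell$ to first select the set $F$ of the $k-\ell$ candidates with the highest Borda scores (the robust part), then remove $F$ from the election and run $\bordacc$ on the remaining candidates to select the last $\ell$ committee members (the volatile part). The lower bound is the easy direction: I would plant, on the candidates outside $F$, a scaled copy of the $\bordacc$ instance from the proof of Proposition~\ref{prop:beta-CC_chaotic} with committee size $\ell$, and pad it with $k-\ell$ extra candidates that are ranked so high everywhere that they always constitute $F$ and are untouched by the pivotal swap. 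The single pivotal swap then leaves $F$ fixed while flipping the entire size-$\ell$ second-stage committee, replacing exactly $\ell$ members, so the robustness level is at least $\ell$.

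For the upper bound I would argue that a single swap changes at most $\ell$ members, by a case analysis on the two swapped candidates $p,q$ and on whether $F$ changes. The key observation is a relabeling isomorphism. If a swap of an $F$-candidate with a non-$F$-candidate leaves $F$ unchanged, then deleting $F$ erases the swap and the second stage is unaffected (zero change); if a boundary swap of the lowest $F$-candidate with the highest non-$F$-candidate does change $F$, then the before- and after-deletion restricted elections are identical up to renaming those two boundary candidates, so $\bordacc$ returns the same committee with the two candidates interchanged, for at most one change overall. A swap between two non-$F$-candidates that does not change $F$ affects only the second stage and hence at most $\ell$ members. Since $\ell \ge 1$, the maximum over all cases is $\ell$, matching the lower bound; that the first stage is genuinely $1$-robust is exactly Proposition~\ref{thm:robust}.

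The main obstacle is the one remaining interaction: a swap between two non-$F$-candidates that nonetheless \emph{promotes} one of them into $F$ while demoting a member of $F$ out. Here the relabeling isomorphism breaks, because the pair whose $F$-membership flips differs from the swapped pair, and in principle the second stage could change by $\ell$ on top of the one change in $F$, giving $\ell+1$. This can happen only through an exact score tie at the $F$/non-$F$ boundary. I would neutralize it by fixing the stage-one tie-breaking so that a single within-pool swap can never push a candidate strictly past the current cutoff; failing a clean neutral fix, I would replace the natural $k$-Borda first stage with a slightly artificial $1$-robust scoring rule engineered so that its selected set is invariant under swaps among the non-selected candidates. Either adjustment keeps the first stage $1$-robust and severs the last coupling, leaving the robustness level exactly $\ell$.
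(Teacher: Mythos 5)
Your lower bound is fine, but the upper bound has a genuine gap, and it sits exactly in the case you believe you have solved, not the one you flag as the ``main obstacle.'' In the boundary-swap case (the lowest-scoring member $f$ of $F$ is swapped with an adjacent non-$F$ candidate $c$ in one vote, and $F$ changes to $F'=F\setminus\{f\}\cup\{c\}$), the claimed relabeling isomorphism between the two restricted elections is false: $f$ and $c$ are adjacent only in the vote where the swap occurs, and in every other vote they generally occupy different positions. For instance, a vote $f \succ z \succ w \succ c$ restricts to $z \succ w \succ c$ when $F$ is deleted, but to $f \succ z \succ w$ when $F'$ is deleted, and the latter is not a renaming of the former. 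Consequently the second-stage $\bordacc$ election can change arbitrarily, its $\ell$ winners can be completely replaced on top of the one change in $F$, and your rule (with a Borda first stage) has robustness level $\ell+1$, not $\ell$. Your proposed repairs do not reach this case: tie-breaking cannot stop a candidate whose score strictly rises above the cutoff from entering $F$, and ``invariance under swaps among the non-selected candidates'' is irrelevant here because the swap involves a selected candidate. The only clean repair in your direction is to make the first stage entirely vote-independent (e.g., always select $k-\ell$ fixed candidates by name), which removes all interaction and leaves exactly the CC stage's robustness $\ell$; but that is stronger than what you propose.

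The paper avoids this difficulty altogether by not insisting that the two stages decouple. It defines $(k-\ell)$-Borda-$\ell$-CC (Borda phase, then $\ell$ candidates completing the committee to maximize its $\beta$-CC score) and proves its robustness level is exactly $\ell+1$: the upper bound applies Proposition~\ref{thm:robust} to the Borda phase (at most one change there) and simply concedes that all $\ell$ CC-phase members may be replaced; the lower bound is a single construction in which one swap simultaneously replaces the $(k-\ell)$-th Borda candidate ($x$ by $y$) and flips the entire CC part ($B$ by $A$). The proposition as stated then follows by running this argument with parameter $\ell-1$. So the paper's accounting ($1$ from the first stage plus $\ell$ from the second, with a matching tight example) is exactly the bookkeeping your write-up tries, incorrectly, to argue away.
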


\begin{proof}
  Since we know that, for example, $k$-Borda and $\beta$-CC are, respectively,
  $1$-robust and $k$-robust for all possible committee sizes, it
  suffices to show rules with robustness levels between $2$ and
  $k-1$. We fix a committee size $k > 1$ and let $\ell$ be an integer
  between $2$ and $k-1$. We let $\ell' := \ell -1$ and we define a
  voting rule that first selects $k-\ell'$~committee members exactly
  as $(k-\ell')$-Borda would, and then selects further~$\ell'$
  candidates that, jointly, maximize the $\beta$-CC score of the whole
  committee. We refer to this rule as $(k-\ell')$-Borda/$\ell'$-CC. We
  will show that this rule is $\ell$-robust (however, it will be
  easier to express this as $(\ell'+1)$-robustness).

  \newcommand{\borda}{\mathrm{B}}
  \newcommand{\cc}{\mathrm{CC}}

  We first show that our rule is at least $(\ell'+1)$-robust. Let $E$
  be some election (with at least $k$ candidates), let $W$ be a
  winning committee for this election, and let $W_\borda$ be its part
  that is selected using $(k-\ell')$-Borda. Let $E'$ be an election
  obtained from $E$ by swapping two adjacent candidates. Since
  $(k-\ell')$-Borda is $1$-robust, our rule certainly has some winning
  committee $W'$ for $E'$ whose $(k-\ell')$-Borda part differs from
  $W_\borda$ in at most one candidate. As a consequence, for this
  committee it must be the case that $|W'\cap W|\ge k-\ell'-1$.
  This shows that our rule is at least $(\ell'+1)$-robust.
  
  Next we show that indeed there are elections where a single swap
  leads to replacing $\ell'+1$ candidates. To this end, we use an
  election very similar to that used in
  Proposition~\ref{prop:beta-CC_chaotic}.  We let the candidate set
  be~$C:=A \cup B \cup \{x, y\} \cup D$, where $|A|=|B|=\ell'$ and
  $|D|=k-\ell'-1$.  We form the voters as follows:
  \begin{enumerate}
  \item We construct voter $v_1$ with preference order
    $v_1:\ x \succ y \succ A \succ B \succ D$.
  \item For each $i \in [\ell]$, we construct a pair of voters with
    preference orders:
    \begin{align*}
      v_{2i}:\ & a_i \succ x \succ y \succ A\setminus\{a_i\} \succ B \succ D,\\
      v_{2i+1}:\ & b_i \succ y \succ x \succ A \succ B\setminus\{b_i\} \succ D.
    \end{align*}
  \item For each $j \in [k-\ell-1]$ we construct sufficiently many
    pairs of voters with preference orders:
    \begin{align*}
      w_j:\ & d_j \succ D\setminus\{d_j\} \succ x \succ y \succ B \succ A,\\
      w'_j:\ & d_j \succ D\setminus\{d_j\} \succ y \succ x \succ A \succ B,
    \end{align*}
    so that all candidates in~$D$ have higher Borda scores than
    all other ones, and candidates~$x$ and~$y$ have higher Borda
    scores than all members of~$A$ and~$B$.
  \end{enumerate}
  As a consequence, in the~$(k-\ell')$-Borda phase our rule selects
  all candidates from~$D$ and candidate~$x$ ($x$ has higher score
  than~$y$ due to voter~$v_1$). Then, in the~$\ell'$-CC phase, our
  rule selects all~$\ell'$~candidates from~$B$. To see this, we
  first note that all the voters from the third group already have
  their top-ranked candidates in the committee, and, so, do not affect
  the selection of the remaining candidates; then we reuse the
  reasoning from Proposition~\ref{prop:beta-CC_chaotic}.
  
  If we swap candidates $x$ and $y$ in vote $v_1$, then candidate~$y$
  will be selected instead of candidate~$x$
  in the $(k-\ell')$-Borda phase, and all the candidates from~$A$ will
  be selected as the remaining~$\ell'$~committee members in the
  $\ell'$-CC phase.
  All in all, prior to swapping $x$ and $y$ in vote $v_1$, our
  election has a unique winning committee $D \cup \{x\} \cup B$, but
  after the swap $D \cup \{y\} \cup A$ becomes the unique winning
  committee. These committees differ in exactly $\ell' + 1 = \ell$
  candidates, which completes the proof.
\end{proof}

\section{Computing Refinements of Robust Rules}\label{sec:computingRefinements}

It turns out that the dichotomy between $1$-robust and $k$-robust
rules is strongly connected to the one between polynomial-time computable rules and
those that are $\np$-hard. To make this claim formal, we need the
following definition.

\begin{definition}\label{def:s-e}
  A multiwinner rule $\calR$ is \emph{scoring-efficient} if the
  following holds:
\begin{enumerate}

\item There is an algorithm that given three positive integers $n$,
  $m$, and $k$ ($k \leq m$) outputs (i)~an election $E$ with $n$ voters
  and $m$ candidates, and (ii)~a size-$k$ committee $S$, such that
  $S \in \calR(E,k)$. This algorithm runs in polynomial time with
  respect to $n$, $m$, and $k$.

\item There is a polynomial-time computable function $f_\calR$ that
  for each election $E$, committee size~$k$, and committee $S$,
  outputs score $f_\calR(E,k,S)$ of committee $S$ in election $E$, so that $\calR(E,k)$
  consists exactly of the committees with the highest $f_\calR$-score.
\end{enumerate}
\end{definition}
The first condition from Definition~\ref{def:s-e} is quite
straightforward to satisfy. For example, for most natural voting rules
it is easy to compute a winning committee for an election where all
voters rank the candidates identically. In particular, this holds
for \emph{weakly unanimous} rules.

\begin{definition}[Elkind et al.~\cite{elk-fal-sko-sli:j:multiwinner-properties}]
  A rule $\calR$ is \emph{weakly unanimous} if for each election
  $E = (C, V)$ and each committee size $k$, if each voter ranks the
  same set~$W$ of $k$~candidates on top (possibly in a different
  order), then $W \in \calR(E, k)$.
\end{definition}

All voting rules which we consider in this paper are weakly
unanimous 
(indeed,
voting rules which are not weakly unanimous are somewhat
``suspicious''). 
%
%
Further, all our rules, except STV, satisfy the second condition from
Definition~\ref{def:s-e}. For example, while winner determination for
$\beta$-CC is indeed NP-hard, computing the score of a given committee
can be done in polynomial time. With this background, we are ready to
state and prove the main result of this section.

\begin{theorem}\label{theorem:robust implies poly wd}
  Let $\calR$ be a $1$-robust scoring-efficient multiwinner rule.
  Then there is a rule~$\calR'$
  such that for each election~$E$ and committee size~$k$
  we have $\calR'(E,k) \subseteq \calR(E,k)$
  and the winner determination for~$\calR'$ is polynomial-time computable.
%
\end{theorem}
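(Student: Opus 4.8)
The plan is to compute a refinement by iteratively "committing" to candidates and using the scoring-efficient structure to compute some winning committee in a modified instance. The key insight should be that $1$-robustness gives us a way to detect, via a single swap, whether a particular candidate can be removed from a winning committee and replaced by exactly one outsider — and since the rule is scoring-efficient, its winning committees are exactly the maximizers of a polynomial-time computable score function $f_{\calR}$. I would exploit these two facts together: robustness controls how the set of winning committees can change under a swap, while scoring-efficiency lets me recognize winning committees (compare scores) in polynomial time even when I cannot enumerate them.

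**First I would** recall that by condition~(1) of scoring-efficiency we can, for suitable parameters, explicitly construct an election together with one of its winning committees in polynomial time. The natural strategy for $\calR'$ is to build the target outcome one candidate at a time. Concretely, I would maintain a partial committee and try to extend it greedily: at each step, I want to decide which candidate to add next so that the partial choice is consistent with \emph{some} winning committee of the original election $E$. Because $f_{\calR}$ is polynomial-time computable by condition~(2), I can compute the score of any fully specified committee and hence recognize the optimum value once I know it. The crux is to determine that optimum value and a witnessing committee without brute force over all $\binom{m}{k}$ committees.

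**The main obstacle** — and where $1$-robustness does the real work — is to show that a single swap, which by definition perturbs $\calR(E,k)$ so that every old winning committee stays within distance one (in symmetric difference) of some new winning committee, forces the set of score-maximizing committees to have a rigid, "interval-like" combinatorial structure. I expect the argument to run by contradiction: if two winning committees $W_1, W_2 \in \calR(E,k)$ differed in more than one element and this difference could not be bridged by single swaps, one could engineer a swap of adjacent candidates that either keeps both tied or promotes a disjoint committee, contradicting $|W \cap W'| \geq k-1$ for \emph{every} post-swap witness. Pinning down this rigidity is what converts the abstract robustness promise into an algorithmic handle; once established, it should imply that a greedy, locally-checked extension (accepting a candidate only when doing so cannot strictly decrease the achievable $f_{\calR}$-score) produces a genuine member of $\calR(E,k)$.

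**Finally I would** assemble the refinement $\calR'$ as follows: use condition~(1) to get a starting winning committee for a \emph{canonical} instance, then argue that the greedy/local-swap procedure, guided by $f_{\calR}$-score comparisons, transforms it into a winning committee of the \emph{given} $E$ in polynomially many steps, each step justified by the $1$-robustness rigidity so that no step leaves $\calR(E,k)$. Because every intermediate decision is verified by a polynomial-time score evaluation and the number of steps is bounded polynomially in $m$ and $k$, the whole rule $\calR'$ runs in polynomial time and, by construction, always outputs a single committee lying in $\calR(E,k)$, giving $\calR'(E,k) \subseteq \calR(E,k)$ as required. The delicate point to verify carefully is that the local moves suffice to reach the target — i.e.\ that robustness guarantees connectivity of the winning set under single-candidate replacements — which is precisely where I expect to spend the most care.
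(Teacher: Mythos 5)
Your proposal contains the right raw ingredients (the canonical election from condition (1), score comparisons via $f_\calR$, and an appeal to $1$-robustness), but it is missing the one idea that makes them fit together, and the substitute you offer does not work. The paper's proof does not reason about the internal structure of $\calR(E,k)$ for the fixed input election $E$ at all. Instead it interpolates between \emph{elections}: it builds a sequence $E_0, E_1, \ldots, E_t$ with $E_0$ the canonical election given by condition (1) (with the same numbers of voters and candidates as $E$), $E_t = E$, and each $E_i$ obtained from $E_{i-1}$ by one swap of adjacent candidates, with $t \leq |C||V|^2$. It then tracks a winning committee along this path: $S_0$ is the winning committee of $E_0$ supplied by condition (1), and given $S_{i-1} \in \calR(E_{i-1},k)$, Definition~\ref{def:robustness} applied to the pair $(E_{i-1}, E_i)$ guarantees that some committee $S''$ with $|S_{i-1} \cap S''| \geq k-1$ lies in $\calR(E_i,k)$; since there are only polynomially many such $S''$ and the winners are exactly the $f_\calR$-maximizers, the highest-scoring committee among them is a genuine winner of $E_i$. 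This is the step where $1$-robustness becomes usable, because $1$-robustness is by definition a statement relating the winner sets of \emph{two different elections} one swap apart---not a statement about one election's winner set.

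Your replacement for this mechanism is the claimed ``rigidity''/``interval-like'' structure and connectivity of $\calR(E,k)$ under single-candidate replacements, and this is a genuine gap. First, no such property follows from Definition~\ref{def:robustness}: your contradiction sketch assumes one can ``engineer a swap that promotes a disjoint committee,'' but nothing in the definition prevents the post-swap winner set from containing committees within distance one of \emph{both} $W_1$ and $W_2$, so no contradiction arises. Second, even granting connectivity, it would not yield an algorithm: your walk starts from a winning committee of the \emph{canonical} instance, which need not belong to $\calR(E,k)$ at all, so the invariant ``no step leaves $\calR(E,k)$'' is circular, and greedy local search guided by $f_\calR$-scores can stall at a committee none of whose single-element modifications improves the score yet which is not a global maximizer ($1$-robustness does not exclude such local optima). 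Third, your partial-committee greedy requires deciding whether accepting a candidate ``cannot strictly decrease the achievable $f_\calR$-score,'' i.e., computing the optimal score subject to a partial commitment---which is essentially the winner-determination problem ($\np$-hard for rules such as $\bordacc$) that the theorem is designed to circumvent. All of these difficulties disappear once the election, rather than the committee, is what is morphed one swap at a time.
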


\begin{proof}
  Our proof proceeds by showing a polynomial-time algorithm that given
  an election $E$ and committee size $k$ finds a single committee $W$
  such that $W \in \calR(E,k)$; we define $\calR'(E,k)$ to
  output~$\{W\}$.

  Let~$E = (C,V)$ be our input election and let $k$ be the size of the
  desired committee. Let $E' = (C,V')$ be an election with $|V'| =
  |V|$, whose existence is guaranteed by the first condition of
  Definition~\ref{def:s-e}, and let $S'$ be a size-$k$ $\calR$-winning
  committee for this election, also guaranteed by
  Definition~\ref{def:s-e}.
%
  The idea is to transform $E'$ into $E$ by a sequence of swaps, while
  at the same time transforming committee $S'$ to an $\calR$-winning
  committee for $E$ (for ease of presentation, we assume that all
  elections in our discussion contain the same voters, but with
  possibly different preference orders).

  Let $E_0, E_1 \ldots, E_t$ be a sequence of elections such that
  $E_0 = E'$, $E_t = E$, and for each integer $i \in [t]$, we obtain
  $E_i$ from~$E_{i-1}$ by (i) finding a voter $v$ and two candidates
  $c$ and $d$ such that in $E_{i-1}$ voter $v$ ranks $c$ right ahead
  of $d$, but in $E$ voter $v$ ranks $d$ ahead of $c$ (although not
  necessarily right ahead of $c$), and (ii) swapping $c$ and $d$ in
  $v$'s preference order. We note that at most $|C||V|^2$ swaps
  suffice to transform $E'$ into $E$ (i.e., $t \leq |C||V|^2$).

  For each $i \in \{0,1, \ldots, t\}$, we find a committee $S_i\in
  \calR(E_i,k)$. We start with $S_0 = S'$ (which satisfies our
  condition) and for each $i \in [t]$, we obtain $S_i$ from~$S_{i-1}$
  as follows:
  Since $\calR$ is $1$-robust, we know that at least one committee~$S''$
  from the set $\{S'' \mid |S_{i-1} \cap S''| \geq
  k-1\}$ is winning in $E_i$.
  We try each committee $S''$ from this set and compute its
  $f_\calR$-score (recall Condition 2 of Definition~\ref{def:s-e}).
  The committee with the highest $f_\calR$-score must be winning in
  $E_i$ and we set $S_i$ to be this committee (by
  Definition~\ref{def:s-e}, computing the $f_\calR$-scores is a
  polynomial-time task).

  Finally, we output $S_t$. By our arguments, we have that $S_t \in
  \calR(E,k)$. 
  Clearly, our procedure runs in polynomial time.
%
\end{proof}

Theorem~\ref{theorem:robust implies poly wd} generalizes to the case
of $r$-robust rules for constant $r$; our algorithm simply has to try
more (but still polynomially many) committees $S''$.

\begin{corollary}
  Let $r$ be a fixed positive integer and let $\calR$ be an $r$-robust
  scoring-efficient multiwinner rule.  Then there is a polynomial-time
  computable rule $\calR'$ such that for each election $E$ and
  committee size $k$ we have $\calR'(E,k) \subseteq \calR(E,k)$.
\end{corollary}

Note how Theorem~\ref{theorem:robust implies poly wd} relates
to single-winner rules, which can be seen as multiwinner rules for
$k=1$. All such rules are $1$-robust, but for those with $\np$-hard
winner determination problems, even computing the candidates' scores is
$\np$-hard (see, e.g., the survey of Caragiannis et
al.~\cite{car-hem-hem:b:dodgson-young}), so
Theorem~\ref{theorem:robust implies poly wd} does not apply.
Indeed, the fact that committee scores are polynomial-time computable
for many typical $\np$-hard multiwinner rules is a significant
difference between them and $\np$-hard single-winner rules.

%

\section{Complexity of Computing the Robustness Radius}\label{sec:complexity}

In the \textsc{Robustness Radius} problem we are given an election and
we ask whether it is possible to change its result by performing a
given number of swaps of adjacent candidates. Intuitively, the more
swaps are necessary, the more robust a particular election is.

\begin{definition}
  Let $\calR$ be a multiwinner rule. In the $\calR$ \textsc{Robustness
    Radius} problem we are given an election $E = (C,V)$, a committee
  size~$k$, and an integer~$B$. We ask if it is possible to obtain an
  election $E'$ by making at most $B$ swaps of adjacent candidates
  to the votes in $E$ so that
  $\calR(E',k) \neq \calR(E,k)$.
\end{definition}

The \textsc{Robustness Radius} problem is strongly connected to some other
problems studied in the literature.
Specifically, in the \textsc{Destructive Swap Bribery} problem (\textsc{DSB} for
short) we ask if it is possible to preclude a particular candidate from winning
by making a given number of
swaps~\cite{elk-fal-sli:c:swap-bribery,shi-yu-elk:c:robustness,kac-fal:c:destructive-shift-bribery}.
\textsc{DSB} was already used to study robustness of single-winner
election rules by Shiryaev et al.~\cite{shi-yu-elk:c:robustness}.
We decided to give our problem a different name, and not to refer to
it as a multiwinner variant of \textsc{DSB}, because we feel that in
the latter the goal should be to preclude a given candidate from being
a member of any of the winning committees, instead of changing the
outcome in any arbitrary way. In this sense, our problem is very
similar to the \textsc{Margin of Victory}
problem~\cite{mag-riv-she-wag:c:stv-bribery,car:c:margin-of-victory,xia:margin-of-victory,blom2017towards},
which is also related to the notions of approximation for sublinear
winner determination algorithms and sampling of
elections~\cite{dey-nar:c:sampling-margin-of-victory,fil-tal:c:sampling-monitoring}; the \textsc{Margin of
  Victory} problem has the same goal, but instead of counting single
swaps, it counts how many votes are changed.

We find that \textsc{Robustness Radius} tends to be computationally
challenging. Indeed, we find polynomial-time algorithms only for the
simplest of our rules, SNTV, Bloc, and $k$-Borda.

\begin{theorem}\label{pro:weak-sep}
  \textsc{Robustness Radius} is solvable in polynomial time for SNTV, Bloc,
  and $k$-Borda.
\end{theorem}
\begin{proof}
  Each of our rules proceeds by computing an individual score for each
  of the candidates (based on this candidate's positions in the
  preference orders of the voters) and by letting the winning
  committees consist of the candidates with the highest scores. We
  first describe a general strategy for dealing with rules of this
  form and then show how to implement this strategy for SNTV, Bloc,
  and $k$-Borda.

  Let $\calR$ be one of our rules, let $E = (C,V)$ be an election with
  $C = \{c_1, \ldots, c_m\}$ and $V = (v_1, \ldots, v_n)$, and let $k$ be
  the committee size. Let $s(c_1), \ldots, s(c_m)$ be the individual scores
  of the candidates $c_1, \ldots, c_m$. Without loss of generality, assume that $s(c_1) \geq \cdots \geq s(c_m)$.  We are interested in
  computing a shortest sequence of swaps of adjacent candidates that
  transforms election~$E$ into some election~$E'$ such that
  $\calR(E,k) \neq \calR(E',k)$. We consider two cases:
  \begin{enumerate}
  \item There is a unique winning committee in election $E$.
  \item There are several tied winning committees in election $E$.
  \end{enumerate}
  We focus on the case with a unique winning committee first.  The
  winning committee is $W = \{c_1, \ldots, c_k\}$ and we have that
  $s(c_k) > s(c_{k+1})$. Consider some arbitrary sequence of swaps
  that transforms $E$ into some election $E'$ such that
  $\calR(E,k) \neq \calR(E',k)$, and consider the first swap after
  performing which the set of winning committees changes. Prior to
  executing this swap, each of the candidates $c_1, \ldots, c_k$ had
  his or her individual score higher than each of the candidates
  $c_{k+1}, \ldots, c_m$, whereas afterward some candidate from the
  latter group had his or her individual score at least as high as one
  of the members of the former group. Thus to find the shortest
  sequence of swaps that changes the result of election $E$, it
  suffices to find the shortest sequence of swaps that ensures that
  some candidate from the set $C \setminus W$ has at least as high
  score as some candidate from committee~$W$.

  \newcommand{\abo}{{\mathrm{above}}}
  \newcommand{\bel}{{\mathrm{below}}}
  \newcommand{\eql}{{\mathrm{equal}}}
  
  Now let us consider the case where there are several winning
  committees. It must be the case that $s(c_k) = s(c_{k+1})$ and we can
  partition the set of candidates into three sets, depending on the
  relation of their score to that of $c_k$:
  \begin{align*}
    C_\abo = \{ c_i \mid s(c_i) > s(c_k) \},
    &&C_\eql = \{ c_i \mid s(c_i) = s(c_k) \},
    &&C_\bel = \{ c_i \mid s(c_i) < s(c_k) \}.
  \end{align*}
  Each $\calR$-winning committee for election $E$ consists of all the
  candidates from the set $C_\abo$ and an arbitrary subset of
  $k-|C_\abo|$ candidates from $C_\eql$.  As in the previous case, let
  us consider a sequence of swaps that transforms election $E$ into
  one with a different set of winning committees, and consider the
  first swap after which the set of winning committees changes. The
  effect of this swap must be that one of the following situations
  happens:
  \begin{enumerate}
  \item Not all candidates in $C_\eql$ have the same score.
  \item All candidates in $C_\eql$ have the same score, but some
    candidate in $C_\abo$ obtains score at most the one of
    the candidates in~$C_\eql$.
  \item All candidates in $C_\eql$ have the same score, but some
    candidate in $C_\bel$ obtains score at least the one of the
    candidates in~$C_\eql$.
  \end{enumerate}
  So, to be able to find the shortest sequence of swaps that changes
  the result of election $E$, it suffices to be able to find the
  shortest sequence of swaps that ensures that one given candidate has
  score higher (or equal) than some other given candidate. For
  example, to deal with the possibility that the shortest sequence of
  swaps that changes the election result leads to some members of
  $C_\eql$ having different scores, it suffices to try each pair
  $p, d$ of distinct candidates from $C_\eql$ and find the shortest
  sequence of swaps that ensures that the score of~$p$ is greater than
  that of $d$. We consider other possible scenarios listed above
  analogously.

  As a consequence of the above reasoning (for both the case of a
  unique winning committee and the case of several winning
  committees), to prove our theorem it remains to show for each of our
  three rules a polynomial-time procedure that given two candidates,
  $p$ and~$d$, finds the shortest sequence of swaps that ensures that
  the score of~$p$ is greater than (or, at least) the score of~$d$. We provide
  such procedures below (we focus on the case of ensuring that $p$'s score is
  at least that of~$d$; adapting our reasoning to the case of
  ensuring that $p$~has strictly greater score than~$d$ is straightforward):

  \begin{description}
  \item[SNTV.] For the case of SNTV, our procedure works as
    follows. We guess three nonnegative numbers, $B_1$, $B_2$, and
    $B_3$. We find $B_1$ votes where $d$ is ranked first and $p$ is
    ranked as high as possible, and we shift $p$ to the top position
    (so $d$ loses his or her Plurality point and $p$~gains it). Then
    we find $B_2$ votes where $p$ is ranked as high as possible (but
    not on the first position), and we shift $p$ to the top
    position. Finally, we find $B_3$ votes where $d$ is ranked first,
    and we shift him or her down by one position in each of these
    votes. (If at any point of this algorithm we do not find
    sufficiently many voters with a given property, we drop this guess
    of $B_1$, $B_2$, and $B_3$.)  We check if as a consequence of our
    swaps $p$'s score is at least the same as that of~$d$ and, if so, we
    record the number of swaps performed. Finally, after considering
    all possible $O(n^3)$ guesses of $B_1$, $B_2$, and $B_3$, we
    output the lowest number of swaps recorded (note that for at least
    one guess our procedure must have succeeded; e.g., when it ensured
    that all voters rank $p$ on top).

  \item[Bloc.] We proceed in the same way as in the case of SNTV, but
    our guesses are a bit more involved.  First, we partition the
    voters into four groups:
    \begin{enumerate}
    \item Voters who neither give a point to $p$ nor to $d$.
    \item Voters who give a point to $p$ but not to $d$.
    \item Voters who give a point to $d$ but not to $p$.
    \item Voters who give points to both $p$ and $d$.
    \end{enumerate}
    We guess numbers $B_1$, $B'_3$, $B''_3$, and $B_4$ of voters, 
    whose preference orders we will modify (note that there
    is no point in affecting the voters in the second group, but there
    are two ways of modifying the preference orders of the voters in the
    third group). For the first group, we execute the smallest number
    of swaps that ensures that $B_1$ voters give a point to~$p$.  For
    the third group, we execute the smallest number of swaps that
    ensures that $B'_3$ voters give a point to $p$ and that $B''_3$
    voters do not give a point to $d$ (note that these operations are,
    in essence, independent). For the fourth group, we execute the
    smallest number of swaps that ensures that $B_4$ voters do not
    give a point to $d$.

  \item[$\boldsymbol k$-Borda.] We perform the following operation
    until the score of $p$ is at least the same as that of~$d$: We
    find a vote where $p$ is ranked below $d$, but the difference
    between their positions is smallest, and we shift $p$ one position
    higher (possibly passing $d$, if in this vote $p$ is ranked just
    below $d$).  Note that if the score of $p$ is lower than that of
    $d$, then there must be a vote where $p$ is ranked below $d$, each
    swap decreases the difference between the scores of~$p$ and~$d$ by
    one~point or by two~points (if $p$ passes $d$), and our strategy of choosing
    swaps ensures the highest number of swaps of value~two.
  \end{description}
  This completes the proof.
\end{proof}

\newcommand{\thmeasyrr}{\textsc{Robustness Radius} is computable in
  polynomial time for SNTV, Bloc, and $k$-Borda.}


The rules in Theorem~\ref{pro:weak-sep} are all $1$-robust, but
not all $1$-robust rules have efficient \textsc{Robustness Radius}
algorithms. In particular, a simple modification of a proof of
Kaczmarczyk and
Faliszewski~\cite[Theorem~6]{kac-fal:c:destructive-shift-bribery}
shows that for $k$-Copeland$^\alpha$ rules (which are $1$-robust) we
obtain $\np$-hardness. We also obtain a general $\np$-hardness result
for all Gehrlein weakly-stable rules.

\begin{corollary}
  $k$-Copeland \textsc{Robustness Radius} is $\np$-hard.
\end{corollary}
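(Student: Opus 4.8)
The plan is to reduce from a known $\np$-hard problem that establishes the difficulty of destructive swap bribery for the single-winner Copeland$^\alpha$ rule. The statement itself signals the intended route: the authors mention a ``simple modification'' of Theorem~3 of Kaczmarczyk and Faliszewski~\cite{kac-fal:c:destructive-shift-bribery}. So I would first locate the reduction used there for single-winner Copeland-style destructive bribery (which typically reduces from an $\np$-hard problem such as a restricted version of \textsc{Exact Cover} or a graph problem encoding the majority relation) and adapt it to the committee setting. The key conceptual point is that since $k$-Copeland$^\alpha$ is $1$-robust, destroying the victory of a single candidate and changing the committee outcome are tightly linked, so a single-winner hardness construction can be lifted with only minor surgery.

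First I would set $k=1$, or more robustly, build an instance where the committee essentially tracks the single-winner Copeland$^\alpha$ winner. Concretely, I would take the reduction instance from \cite{kac-fal:c:destructive-shift-bribery} that asks whether a designated candidate $p$ can be dethroned from being the unique Copeland$^\alpha$ winner using at most $B$ swaps, and package it so that in the committee election the winning committee changes precisely when $p$ stops being among the top-$k$ Copeland$^\alpha$ scorers. I would add a block of ``filler'' candidates that are guaranteed to occupy $k-1$ committee seats regardless of cheap swaps (using McGarvey's theorem to fix their pairwise majority relations so their Copeland$^\alpha$ scores are pinned well above everyone else's and cannot be perturbed within budget $B$), leaving the last seat to be contested between $p$ and its nearest competitor. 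Then $\calR(E',1\text{-seat-slot})$ changes if and only if $p$ loses the contested seat, which by construction happens iff the original single-winner destructive-bribery instance is a yes-instance.

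The main step is verifying the equivalence of the two directions. For the forward direction, any solution to the single-winner instance that pushes $p$ below its rival translates into the same swaps changing which candidate takes the last committee seat, hence altering $\calR(E',k)$. For the reverse direction, I would argue that because the $k-1$ filler seats are immovable under the budget, the only way to change the committee outcome is to change the identity of the candidate holding the contested seat, which forces a change in the relative Copeland$^\alpha$ ranking of $p$ against its competitor---exactly what the source problem demands. The $\emph{tie}$ handling from the parallel-universes/tied-committees model (Definition~\ref{def:robustness} and the general multiwinner setup) needs a careful but routine check: one must ensure the original election has a \emph{unique} winning committee so that any change is detectable, and that the Copeland$^\alpha$ ``$\alpha$ times half-ties'' term does not create spurious additional winning committees.

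The hard part will be the robustness of the filler gadget: I must guarantee that with at most $B$ swaps the adversary cannot dislodge any filler candidate from the committee or inadvertently promote a filler candidate into the contested comparison, across \emph{all} values of $\alpha \in [0,1]$ simultaneously. Making the Copeland$^\alpha$ score gaps large enough (independent of $\alpha$, since $\alpha$ only scales the half-tie contributions) while keeping the number of votes polynomial---so that McGarvey's construction stays efficient---is the delicate bookkeeping. Once the gadget is shown to be budget-stable, the reduction is an immediate adaptation of the known single-winner construction, and the $\np$-hardness of $k$-Copeland$^\alpha$ \textsc{Robustness Radius} follows.
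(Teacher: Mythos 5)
Your overall route matches the paper's: the paper itself gives no self-contained construction for this corollary, but justifies it as ``a simple modification of a proof of Kaczmarczyk and Faliszewski [Theorem~3]'', and your filler gadget (pinning $k-1$ seats via McGarvey-style padding with pairwise margins exceeding $2B$, so that exactly one seat remains contestable) is a sound, standard way to handle the committee dimension. Your observation that the gadget works for all $\alpha$ simultaneously is also correct, since the fillers can be made tie-free with large margins.

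The genuine gap is in treating the cited single-winner hardness as a black box. Theorem~3 of Kaczmarczyk and Faliszewski concerns destructive \emph{shift} bribery: the only allowed operations are shifts of the one despised candidate, and the goal refers specifically to that candidate. \textsc{Robustness Radius} allows arbitrary swaps of adjacent candidates anywhere in the profile and counts \emph{any} change of the set of winning committees. Your forward direction is fine, but the reverse direction---that every budget-$B$ outcome change in the padded instance yields a valid solution of the source instance---does not follow from the statement of the source theorem. With the fillers pinned, an outcome change means the top non-filler candidate is no longer uniquely $p$; this can be achieved not only by pushing $p$ down (which does correspond to shifting $p$) but also by \emph{raising} some competitor $q$ through swaps that never touch $p$, an operation with no counterpart in shift bribery. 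Whether such raising moves can be cheaper than dethroning $p$ depends on the internals of the Kaczmarczyk--Faliszewski construction, so one must open up that reduction and re-verify it against arbitrary swaps---which is exactly the ``modification of the proof'' the paper alludes to, and exactly the step your sketch leaves unexamined. A secondary, fixable issue of the same flavor: merely creating a tie for the contested seat already changes $\calR(E,k)$, so you need the source hardness in the unique-winner model (or a parity argument ruling out ties within budget); otherwise yes-instances of \textsc{Robustness Radius} need not map to yes-instances of the targeted-candidate problem.
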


\newcommand{\thmgwshard}{\textsc{Robustness Radius} is $\np$-hard for
  each Gehrlein weakly-stable rule.}

\begin{theorem}\label{thm:gws-hard}
  \thmgwshard
\end{theorem}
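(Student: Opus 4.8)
The plan is to exploit the fact that membership in a Gehrlein weakly-stable committee is governed entirely by the (strict) majority graph: a size-$k$ committee $S$ is weakly stable if and only if no candidate outside $S$ strictly beats (by a majority) any candidate inside $S$, i.e.\ $S$ is a \emph{dominant set} in the majority graph. Since a single swap of adjacent candidates $c,d$ alters only the pairwise relation between $c$ and $d$ (changing their margin by two), I can reason about the reduction purely at the level of margins and realize any target majority graph via McGarvey's theorem~\cite{mcg:j:election-graph}, while retaining full control over the exact margins and hence over the swap-cost of flipping each relation.

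First I would fix a large budget $B$ and build an election on candidate set $T \cup A$, where $T$ is a \emph{robust core} of $k-1$ candidates that strictly beat every other candidate by a margin exceeding $2B$. No sequence of at most $B$ swaps can reverse any majority relation incident to $T$, so $T$ is contained in every weakly-stable committee, and every such committee has the form $T \cup \{x\}$ for a single $x \in A$. Consequently $T \cup \{x\}$ is weakly stable exactly when $x$ weakly beats every other candidate of $A$, that is, when $x$ is a weak Condorcet winner of the sub-election on $A$. Thus the whole question collapses to: can at most $B$ swaps change the set of weak Condorcet winners of the sub-election on $A$?

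Into this sub-election I would embed an instance of an $\np$-hard swap-bribery problem, in the spirit of the construction adapted from Kaczmarczyk and Faliszewski~\cite{kac-fal:c:destructive-shift-bribery} that underlies the $k$-Copeland corollary, whose core is the classic hardness of raising a designated candidate to Condorcet-winner status. The sub-election is designed so that $A$ has a \emph{unique} weak Condorcet winner $w$, which robustly beats every candidate of $A \setminus \{w,p\}$ by more than $2B$, while a single designated candidate $p$ can in principle be promoted to weak-Condorcet-winner status. The minimum number of swaps needed to make $p$ weakly beat all of $A \setminus \{p\}$ equals the optimum of the embedded hard instance, and I would set $B$ to its threshold value.

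The main obstacle, and the point requiring the most care, is that a Gehrlein weakly-stable rule is pinned down only \emph{when a stable committee exists}; if some budget-$B$ modification destroyed all stable committees, the rule's unspecified fallback could accidentally reproduce the original output and spoil the reduction. To rule this out I would make $w$ the \emph{hardest} candidate for $p$ to overtake---ranking $w$ directly above $p$ in precisely those votes where $p$ sits below all of $A$---so that $p$ weakly beats $w$ only after it has overtaken every other candidate of $A$; hence ``$p$ beats $w$'' becomes equivalent to ``$p$ is a weak Condorcet winner.'' This equivalence, together with the robust margins that prevent any other candidate from dethroning $w$, guarantees that $T \cup \{w\}$ remains weakly stable under every modification that fails to fully promote $p$, so a stable committee always exists and the fallback is never triggered. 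It follows that at most $B$ swaps can change $\calR(E,k) = \{T \cup \{w\}\}$ if and only if they make $p$ a weak Condorcet winner, creating the new stable committee $T \cup \{p\}$, which happens within budget exactly for the yes-instances of the embedded problem. The argument is uniform over all Gehrlein weakly-stable rules, since on this construction they all agree on the (always nonempty) set of weakly-stable committees.
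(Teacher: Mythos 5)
Your high-level framework is genuinely different from the paper's: the paper reduces from \textsc{Set Cover} using two disjoint $k$-cycles $A$ and $B$ in the majority graph (plus element candidates $X$ and a large dummy block $D$), so that the outcome changes by making $A$ weakly stable \emph{alongside} $B$; you instead collapse the multiwinner question to single-winner weak-Condorcet-winnership on a sub-election, via a robust core $T$ of $k-1$ candidates. That skeleton is sound, and you correctly isolate the central difficulty---that a Gehrlein weakly-stable rule is unconstrained on profiles admitting \emph{no} weakly-stable committee---a subtlety the paper's own sketch does not explicitly address. The problem is that your fix for this difficulty fails. Ranking $w$ \emph{directly above} $p$ makes each flip of the pair $(p,w)$ cost exactly one swap and leaves it completely decoupled from $p$'s deficits against the other candidates of $A$; this is the opposite of a coupling device. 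Write $\delta(y)$ for the number of pairwise flips $p$ needs against $y$. Your budget must satisfy $B \geq \delta(w) + \sum_{y \neq w} \delta(y)$ (otherwise even yes-instances cannot promote $p$), and the embedded instance is trivial unless $\sum_{y \neq w}\delta(y) \geq 1$; hence $\delta(w)+1 \leq B$. So in a no-instance an adversary can spend $\delta(w)+1$ swaps flipping only $(p,w)$ pairs: then $p$ strictly beats $w$, so $T \cup \{w\}$ is no longer weakly stable; $T \cup \{p\}$ is not stable either (the other deficits are untouched); and no committee $T \cup \{x\}$, $x \in A \setminus \{w,p\}$, can be stable since $w$ still strictly beats $x$. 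The modified profile has no weakly-stable committee, the rule's output on it is unconstrained, and the ``no'' direction of your reduction breaks for some rules---exactly the failure mode you set out to exclude. Your claimed equivalence ``$p$ beats $w$ iff $p$ is a weak Condorcet winner'' simply does not hold for the vote arrangement you describe.

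Moreover, this is a structural tension, not a matter of tuning constants. Blocking the attack with decoupled $(p,w)$ flips would require $\delta(w)+1 > B$, which together with $B \geq \delta(w) + \sum_{y\neq w}\delta(y)$ forces all other deficits to zero---and then promotion reduces to greedily flipping $(p,w)$ pairs, killing the NP-hardness of the embedded problem. So you genuinely need coupling: in every vote where $w \succ p$, the remaining $A$-candidates (or suitable padding) must lie \emph{between} $w$ and $p$, so that paying for a $(p,w)$ flip forces payments and flips against others in that vote; and you must then re-prove that the embedded problem stays hard under this coupling, since the coupling itself threatens to trivialize it. That combination is precisely what the paper engineers: the arcs from each $x_i$ to $a_1$ are created by set-votes of the form $x_{j'}\succ x_{j''}\succ x_{j'''}\succ D \succ a_1 \succ \cdots$, so killing them costs $|D|+3$ swaps per vote, and the hardness lives in choosing \emph{which} $h$ votes to pay for (a set cover). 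Since, in addition, you never instantiate the embedded hard instance (you defer to ``in the spirit of'' Kaczmarczyk--Faliszewski), the proof as written has a genuine gap at its most delicate point.
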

\begin{proof}
  We reduce from the $\np$-hard \textsc{Exact 3-Set Cover}
  problem~\cite{gar-joh:b:int} where we are given a
  set~$X=\{x_1, \ldots, x_{3h}\}$ of elements and a
  set~$\mathcal{S} = \{S_1, \ldots, S_m\}$ of triplets of elements
  of~$X$. We ask for $h$~triplets that, together, contain all 
  elements of~$X$. In the following reduction we assume that every
  element occurs in exactly three triplets; this variant of the
  problem remains $\np$-hard \cite{gon85}.

  Our reduction proceeds as follows.  For each element~$x \in X$, we
  have an \emph{element candidate}~$c(x)$ (for a given set $X'$ of the
  elements, $X' \subseteq X$, we write $c(X')$ to denote the set of
  element candidates that correspond to the members of $X'$; in
  particular, $C(X)$ means the set of all element candidates). We
  will have $2m+8h$ voters and for each of them we introduce $4h+1$
  distinct dummy candidates. We write $\mathcal{D}$ to denote the set
  of all these dummy candidates, and for each voter~$v$ we write
  $D(v)$ to denote the set of dummy candidates associated with
  $v$. Further, we also have two special candidates, $p$ and
  $d$. Altogether, we have $2 + 3h + (4h+1)(2m + 8h)$~distinct
  candidates, collected in the set:
  \[
    C = \{p,d\} \cup c(X) \cup \mathcal{D}.
  \]
  Using the notation introduced in~Remark~\ref{rem:notation}, we form
  the following $2m+8h$ voters (in each preference order, ellipses
  represent all the candidates not mentioned explicitly, ordered in
  an arbitrary way):
  \begin{enumerate}
  \item For each triplet~$S \in \mathcal{S}$, there are two voters:
    \begin{align*}
      v_S \colon \quad& d \succ c(S) \succ p \succ D(v_S) \succ \ldots,\\
      \bar{v}_S \colon \quad& c(X) \setminus c(S) \succ D(\bar{v}_S) \succ p \succ
                              d \succ \ldots. 
    \end{align*}
  \item For each $i \in [h-1]$, there is a voter with the following
    preference order:
    $$v_i \colon \quad d \succ D(v_i) \succ p \succ c(X) \succ \ldots.$$
  \item For each $i \in [h+1]$, there is a voter with the following
    preference order:
    $$v'_i \colon \quad d \succ c(X) \succ D(v'_i) \succ p \succ \ldots.$$
  \item For each $i \in [3h]$, there are two special voters:
    \begin{align*}
      v^{*}_i \colon \quad& d \succ c(X) \succ D(v^*_i) \succ p \succ \ldots,\\
      \bar{v}^*_i \colon \quad& p \succ d \succ D(\bar{v}^*_i) \succ c(X) \succ
                                \dots.
    \end{align*}
  \end{enumerate}

  We form an instance of the~\textsc{Robustness Radius} that contains
  an election with the candidates and voters described above,
  committee size~$k=1$, and the number of swaps set to $B=4h$.

    \begin{figure}
    \centering%
    \scalebox{.7}{%
      \begin{tikzpicture}[->,shorten >=1pt,auto,node distance=4cm,
        semithick]
    \tikzstyle{cand}=[draw, circle, minimum width=1.3cm]
    \node[cand] (d) {$d$};
    \node[cand] (p) [right of=d] {$p$};
    \node[cand] (X) [right of=p] {$c(X)$};
    \node[cand] (D) [below of=p] {$\mathcal{D}$};
  
    \path (d) edge  node {$2h$} (p)
              edge[bend left=20]  node {$6+8h$} (X)
	      edge node[sloped, xshift=-1.3cm] {$\geq 2m+8h-2$} (D)
	  (p) edge node [sloped, xshift=-1.45cm, yshift=0.3cm] {$\geq 2m+8h-2$} (D)
	  (X) edge node {$2$} (p)
	      edge node [sloped, xshift=-1.5cm] {$\geq 2m+8h-2$} (D);
   \end{tikzpicture}
   }
   \caption{\label{fig:gehrlein-weakly-stable-reduction}A (simplified)
     majority graph of the election constructed by the reduction in
     the proof of~Theorem~\ref{thm:gws-hard}. All dummy
     candidates~$\mathcal{D}$ and all element candidates~$c(X)$ are
     contracted to a single vertex. All arcs within the contracted
     vertices are neglected.}
   \end{figure}

   We present the constructed election visually as a (slightly
   simplified) weighted majority graph in
   Figure~\ref{fig:gehrlein-weakly-stable-reduction}. In this graph,
   each vertex corresponds either to a single candidate or to a set of
   candidates. If we have an edge from a vertex associated with
   candidate $c$ to a vertex associated with candidate $c'$, with
   weight $w$, then it means that $w$ more voters prefer $c$ to $c'$
   than the other way round.
   For example, there is an arc with weight~$6+8h$ pointing from
   candidate~$d$ to a vertex associated with~$c(X)$.  This arc
   indicates that for every element candidate~$c(x)$, the set of
   voters that prefer $d$ to $c(x)$ contains $6+8h$ more voters than
   the set of voters who prefer $c(x)$ to $d$. To see that this indeed
   is the case, note that every voter in groups~$2$, $3$, and $4$
   prefers~$d$ to~$c(x)$; hence we have~$8h$ voters who prefer~$d$
   to~$c(x)$. In group~$1$ (of~$2m$ voters), $d$~is preferred to~$x$
   by exactly $m+3$~voters. Thus, in this group, six more voters
   prefer~$c(x)$ to $d$ than the other way round.  The computation is
   analogous for all other element candidates and, thus,
   candidate~$d$'s winning margin over each of them is~$6 + 8h$.

   Let us now show that the reduction is correct.  Note that as the
   committee size is one, if some candidate is a Condorcet winner, then
   every Gehrlein weakly-stable rule outputs a single winning
   committee, containing exactly this candidate. Similarly, if there
   are weak Condorcet winners in the election, then the winning
   committees are exactly those singletons that contain them. In our
   election, $d$ is a Condorcet winner (indeed, in
   Figure~\ref{fig:gehrlein-weakly-stable-reduction} there are arcs
   from~$d$ to every other vertex) and, so, committee~$\{d\}$ wins
   uniquely.

   Let us assume that there is an exact cover of~$X$ with $h$~triplets
   from~$\mathcal{S}$, and let~$I = \{i_1, \ldots, i_h\}$ be the set
   of indices of these triplets (formally, we have that
   $\bigcup_{i \in I}S_i = X$).
   If for each $i \in I$ we shift candidate~$p$ to the top of the preference
   order of voter $v_{S_i}$, then altogether we make $4h$~swaps and $p$~becomes
   a weak Condorcet winner. This is so because (i) $p$~is ranked on the fifth
   place in each of these votes, (ii) the swaps cause $p$ to pass $d$
   in~$h$~votes (so $p$ ties with $d$ in their head-to-head contest), and (iii)
   the swaps cause $p$ to pass each element candidate exactly once (so $p$ ties
   in a head-to-head contest with each element candidate). As a consequence,
   $\{p\}$ and~$\{d\}$ are two winning committees and we see that election
   result has changed.


   Let us now consider the opposite direction. We first note that if we perform
   up to $4h$~swaps, then we can change the winning margins indicated in
   Figure~\ref{fig:gehrlein-weakly-stable-reduction} by at most $8h$. As a
   consequence (and assuming that $m \geq 2$), after $4h$ swaps candidate $d$
   certainly is still preferred to each candidate other than $p$ by a majority
   of the voters. Further, after $4h$ swaps still at least half of the voters
   prefer $d$ to~$p$. This is so because in each vote either $p$ already is
   preferred to $d$ or it takes at least~four swaps to move $p$ ahead of $d$;
   this means that with $4h$ swaps, we can change at most $h$ voters to prefer
   $p$ over $d$ and this just enough to ensure that $p$ and $d$ tie in their
   head-to-head contest. As a consequence, after $4h$ swaps $d$ certainly is a
   (weak) Condorcet winner and $\{d\}$ is among the winning committees.

   To ensure that $\{d\}$ is not the only winning committee, it is
   necessary to guarantee that some other candidate is a weak
   Condorcet winner. Based
   on~Figure~\ref{fig:gehrlein-weakly-stable-reduction}, it is clear
   that after $4h$ swaps all element candidates and dummy candidates
   loose at least one head-to-head contest (assuming~$m>1$) and, so,
   only $p$ may become a weak Condorcet winner. For this to happen,
   (i) $p$ needs to pass $d$ in $h$ votes, and (ii) $p$ needs to pass
   each element candidate in at least one vote. A simple counting
   argument shows that this is possible only by shifting $p$ to the
   top position in $h$ votes from the first group that correspond to
   an exact cover of $X$ with $h$ triplets from $\mathcal{S}$.


   We conclude by noting that the reduction works in polynomial time.
\end{proof}

Without much surprise, we find that \textsc{Robustness Radius} is also
$\np$-hard for $\bordacc$ and STV. For these rules, however, the
hardness results are, in fact, significantly stronger. In both cases
it is already $\np$-hard to decide whether the outcome of the given
election changes after a single swap, and for STV the result holds
even for committees of size one ($\bordacc$ with committees of size
one is equivalent to the single-winner Borda rule, for which the
problem is polynomial-time solvable~\cite{shi-yu-elk:c:robustness}; this
also follows directly from Theorem~\ref{pro:weak-sep}).

\newcommand{\thmstvbcc}{\textsc{Robustness Radius} is $\np$-hard both
  for STV and for $\bordacc$, even if we can perform only a single
  swap; for STV this holds even for committees of size~$1$. For
  $\bordacc$, the problem 
  is $\wone$-hard with respect to the committee size.}


\begin{theorem}\label{thm:bcc-hard}
  $\beta$-CC \textsc{Robustness Radius} is $\np$-hard and $\wone$-hard
  with respect to the size of the committee even if the robustness
  radius is one.
\end{theorem}

\begin{proof}
  We show the result by giving a reduction from the \textsc{Regular
    Multicolored Independent Set} problem. In this problem we are given a
  regular graph $G$, where each vertex has degree $d$ and has one of
  $h$~colors, and we ask if there is an \emph{$h$-colored independent
    set}, that is, a size-$h$ set of pairwise non-adjacent vertices
  containing one vertex from each color class.  This problem is known
  to be both $\np$-complete and $\wone$-hard for the parameter
  $h$~\cite[Corollary 13.8]{CyganFKLMPPS15}.  To obtain our
  $\wone$-hardness result, we will ensure that the reduction uses
  committees of size that is a function of $h$ only (indeed, we will
  use committee size $h+2$; aside from this one restriction, we give a
  standard many-one reduction.\smallskip

  \noindent\textbf{Input Instance.}\quad
  Let ~$(G,h,d)$ be an instance of \textsc{Regular Multicolored
    Independent Set}. We let $s:=|V(G)|$ be the number of vertices in
  the input graph and $r:=|E(G)|$ be the number of its edges. We
  assume, without loss of generality, that $s \geq 2h$ (indeed, in a graph with no
  isolated vertices there is no independent set that contains more
  than half of the vertices). Below we describe the election that we
  use in our $\beta$-CC \textsc{Robustness Radius} instance.\smallskip

  \noindent\textbf{Candidates and Committee Size.}\quad
  The set of candidates consists of the vertex set~$V(G)$ of the
  graph~$G$, the set~$Z:=\{z_0,z_1,z_2\}$ of \emph{special
    candidates}, the set $X:=\{x_1,\dots,x_h\}$ of \emph{safe
    candidates}, and the set~$D$ of \emph{dummy candidates} (the
  number of dummy candidates and the fact that there are polynomially
  many of them with respect to $r+s$ will become clear later). We set
  the committee size~$k:=h+2$. \smallskip

  \noindent\textbf{High-Level Idea.}\quad
  The idea of the construction is to ensure that for our election
  the following holds:
  \begin{enumerate}
  \item The \emph{safe committee} $\{z_0,z_1,x_1,\dots,x_h\}$ is
    always winning (possibly uniquely).
  \item For each $V' \subseteq V$, if $V'$ is an $h$-colored
    independent set, then $\{z_0,z_2\} \cup V'$ is a winning
    committee.
  \item There are no other winning committees.
  \item Using a single swap of adjacent candidates---which gives the robustness
   radius of one---it is possible to ensure that the safe committee is the only
   winning committee (in other words, a single swap suffices to change the set
   of winning committees if and only if there is an $h$-colored independent set
   for $G$).
  \end{enumerate}
  In particular, we will ensure that if there is no $h$-colored
  independent set, then the safe committee will have dissatisfaction
  score lower by at least four points than the next best committee (so
  a single swap would not suffice to change the set of winning
  committees); for the notion of the dissatisfaction score, recall
  Remark~\ref{rem:dissat}.\smallskip



  \newcommand{\dissat}{\ensuremath{\Delta} }
  \newcommand{\dummies}{\ensuremath{\ \ggg\ }} \noindent\textbf{Dummy
    Candidates and the $\boldsymbol\dissat$ Value.}\quad We will
  ensure that the safe committee will have dissatisfaction score:
  \[
    \dissat :=8r + hs^2,
  \]
  and that, indeed, this will be the lowest possible dissatisfaction
  score (prior to performing swaps).  To simplify our construction, we
  use a number of dummy candidates and we adopt the following
  convention: Whenever we put some dummy candidate among the top~$\dissat$
  positions in a vote, we put this candidate beyond position $\dissat$
  in all other votes (on its own, this is not enough to guarantee
  that no dummy candidate belongs to a winning committee, but we will
  later show that this indeed is the case).  As a consequence, for $n$
  voters we need at most $O(n\dissat)$ dummy candidates. Since we
  will form only polynomially many voters, we will also need only
  polynomially many dummy candidates.\smallskip

 
  \noindent\textbf{Voters.}\quad In the following, we describe the voters of
  our election in four groups, each playing a specific role in the
  construction. We briefly mention the voters' respective roles and
  formally prove them later.  Whenever we put the symbol $\dummies$ in
  a preference order, we mean listing $\Delta$ ``fresh'' dummy
  candidates (i.e.,\,ones that are not ranked among the top $\dissat$
  positions by the other voters), followed by all the remaining
  candidates in some arbitrary order.
  
  \begin{description}
  \item[Special Candidate
    Voters.] 
    This group consists of $h+3$~voters with preference orders of the
    form $z_0 \succ \dummies$. These voters ensure that every winning
    committee includes candidate~$z_0$.
  \item[Safe Committee Voters.] For each color~$i \in [h]$, we form
    $(s+1)\cdot s/2+ 6d$~ voters with preference order
    $x_i \succ z_2 \succ \dummies$. These voters ensure that the safe
    committee~$\{z_0,z_1,x_1,\dots,x_h\}$ is indeed winning.
  \item[Vertex Selection Voters.] For each color~$i \in [h]$, we form
    $s$ voters, where each vertex candidate of color $i$ appears
    exactly once on each of the first $s$ positions, candidate $z_1$
    is ranked on the $(s+1)$-th position, and all other
    top~$\dissat$ positions are taken by the dummy candidates.
    Formally, we form these voters as follows. We start with $s$
    voters with preference orders:
  
    \begin{profile}{ccccccccccccc}
      v_1 &\succ& v_2 &\succ& \cdots &\succ& v_{s-1} &\succ& v_s     &\succ& z_1 &\succ& \dummies, \\
      v_2 &\succ& v_3 &\succ& \cdots &\succ& v_{s}   &\succ& v_1     &\succ& z_1 &\succ& \dummies, \\
      &\vdots&    &     &       &     &         &\vdots&        &     &     &     &           \\
      v_s &\succ& v_1 &\succ& \cdots &\succ& v_{s-2} &\succ& v_{s-1}
      &\succ& z_1 &\succ& \dummies.
    \end{profile}
    Then we replace each vertex candidate that is not of color~$i$
    with a fresh dummy candidate. The role of this group is to ensure
    that except for the safe committee, every other winning committee
    (if it exists) must contain exactly one vertex of each color.

  \item[Independent set voters.] For every edge~$\{u,v\}$ we introduce
    two pairs of voters, with preference orders of the form:
    \begin{align*}
      &u \succ v \succ z_0 \succ \dummies, \text{ and}\\
      &v \succ u \succ z_0 \succ \dummies.
    \end{align*}
    The role of this group is to ensure that if there is a winning
    committe that contains $h$ vertex candidates, then these vertices
    form an independent set.
  \end{description}
  This completes the construction. We note that it is computable in
  polynomial time. Before we formally prove the correctness of our
  construction, we discuss several important facts about possible
  winning committees for the constructed election.\smallskip
 
  \noindent\textbf{Safe Committee.} \quad
  First, observe that the safe committee $\{z_0,z_1,x_1,\dots,x_h\}$
  provides total dissatisfaction score equal to~$\dissat$. To see
  this, note that the special candidate voters and the safe committee
  voters have dissatisfaction score zero for it. For every color, the
  respective vertex selection voters together have dissatisfaction score
  equal to $s^2$. Thus, the dissatisfaction score of all vertex selection
  voters of all colors is $hs^2$. The independent set voters generate
  dissatisfaction score equal to $8r$ (for each edge, the two pairs of
  voters in total have dissatisfaction score~$8$). Altogether, the
  safe committee has dissatisfaction score
  $\dissat = 8r+ hs^2$.\smallskip

  \noindent\textbf{Independent Set Committees.} \quad
  Second, observe that every committee $\{z_0,z_2\} \cup V'$, where
  $V' \subseteq V(G)$ is an $h$-colored independent set, causes total
  dissatisfaction exactly~$\dissat$. Indeed, for such a committee the
  following holds (we provide additional explanations for the last two
  voter groups below):
  \begin{enumerate}
  \item Special candidate voters have dissatisfaction score equal to
    zero.
  \item Each safe committee voter has dissatisfaction score equal to one
    (due to candidate $z_2$), so altogether their dissatisfaction score is
    $h((s+1)\cdot s/2+ 6d) = (s+1)\cdot hs/2+ 6hd$.
   \item Vertex selection voters have total dissatisfaction score~$h(s-1)\cdot
    s/2$. To see this, consider a group of vertex selection voters for some
    color $i$. As $V'$ is $h$-colored, it contains exactly one vertex of color
    $i$, which these voters rank on all positions between $1$ and $s$ (and they
    rank all other committee members below these positions). This means that
    their dissatisfaction is $0+1+ \cdots + (s-1) = (s-1)\cdot {s}/{2}$. As
    there are $h$ colors, after multiplying this number by~$h$, we get our total
    dissatisfaction value.

   \item Independent set voters have total dissatisfaction score~$8r-6hd$. To
    see why this is the case, we first note that the voters in this group have
    dissatisfaction at most $8r$ due to
    candidate~$z_0$. However, for each edge $\{u,v\}$ such that $V'$
    contains exactly one of the vertex candidates $u$, $v$, this
    dissatisfaction is decreased by $6$ (if our committee contained
    both $u$ and $v$, then the dissatisfaction would be decreased by
    $8$, but this does not happen as we assumed $V'$ to be an
    independent set). Since our committee contains exactly $h$
    vertices and each vertex touches exactly $d$ unique edges (because
    $V'$ is an independent set), we have total dissatisfaction~$8r-6hd$.
  \end{enumerate}

  One can verify (and we will show this formally later) that if we
  replace $V'$ with a set of $h$~vertices of different colors that do
  not form an independent set, then the dissatisfaction would be
  higher by at least four points (intuitively, for every two points
  that we gain by ``covering'' some edge with two vertices rather than
  one, we lose six points for being able to cover one edge less).\smallskip

  \noindent\textbf{Losing Committees.}\quad  
  Next, we show that every other committee causes total
  dissatisfaction at least~$\dissat+4$. To this end, we distinguish
  between five cases for possible committees.

  \begin{description}  
  \item[Case 1 (committees that do not contain $\boldsymbol{z_0}$).]
    Every committee $C'$ that does not contain candidate $z_0$ causes
    total dissatisfaction at least~$\dissat+h$. When $z_0$~is not part
    of the committee, then up to $k=h+2$~voters from the special
    candidate voters group have dissatisfaction at least one (in best
    case, they are represented by their second-best choice), and the
    last one has dissatisfaction at least $\dissat$.  Thus $z_0$ must
    belong to all winning
    committees.

   \item[Case 2 (committees that contain $\boldsymbol{z_0}$, $\boldsymbol{z_1}$,
    and $\boldsymbol{z_2}$).] Every committee $C'$ that contains $z_0$,
    $z_1$, and $z_2$ causes total dissatisfaction at
    least~$\dissat+4$. To see this, let us first consider the
    dissatisfaction of the voters when they are represented
    by~$\{z_0,z_1,z_2\}$ only. In this case, the special candidate
    voters have zero dissatisfaction score, the safe committee voters have
    dissatisfaction score of $h((s+1)\cdot s/2+6d)$, the vertex selection
    voters have dissatisfaction score $hs^2$, and the independent set
    voters have dissatisfaction score $8r$. Thus the total dissatisfaction
    is:
    \[
      \big(h((s+1)\cdot s/2+6d)\big) + \big(hs^2\big) +\big(8r\big)
        = \Delta + h((s+1)\cdot s/2+6d).
    \]
    Let us now consider the remaining $h-1$ candidates. Each of the
    safe candidates can decrease the dissatisfaction by
    exactly $(s+1)\cdot s/2+6d$. Each of the vertex candidates can
    decrease the dissatisfaction by at most~$(s+1)\cdot s/2+6d$ (the
    first part comes from the vertex selection voters, who for a given
    vertex decrease the dissatisfaction by at most $1+2+ \cdots + s$,
    and the second one comes from the independent set
    voters\footnote{If an edge is covered by a single vertex
      candidate, the satisfaction decreases by $6$. If it is covered
      by two vertex candidates, it decreases by $8$, but we ``split''
      it over two candidates, so each of them decreases the
      dissatisfaction by~$4$.}).
    We have that
    $ h((s+1)\cdot s/2+6d) - (h-1)((s+1)\cdot s/2 + 6d) = (s+1)\cdot
    s/2 + 6d > 4$.  That is, altogether the remaining $h-1$ candidates
    cannot cause the dissatisfaction to be lower than $\Delta + 4$.
   
    
  \item[Case 3 (committees that contain $\boldsymbol{z_0}$ but not
    $\boldsymbol{z_2}$).] Consider a committee $C'$ that contains $z_0$
    and does not contain $z_2$. If it does not contain all 
    candidates from~$\{x_1,\dots,x_h\}$, then its dissatisfaction must
    be (much) larger than~$2\dissat$. For example, if it does not
    contain some given candidate $x_i$, then at least
    $(s+1)\cdot s/2+ 6d-(h+1)>2$ voters with preference order of the
    form $x_i \succ z_2 \succ \dummies$ are dissatisfied by at
    least~$\dissat$.  Thus let us assume that $C'$ contains~$z_0$ and
    all candidates from~$\{x_1, \ldots, x_h\}$. If it does
    not contain $z_1$, then---using similar reasoning as before---the
    vertex selection voters cause dissatisfaction (much) greater than
    $2\dissat$. In summary, the safe committee is the only committee
    that contains $z_0$, does not contain~$z_2$, and has
    dissatisfaction lower than $\dissat+4$ (indeed, as we have seen,
    it has dissatisfaction exactly~$\dissat$).

  \item[Case 4 (committees that contain $\boldsymbol{z_0}$ but not
    $\boldsymbol{z_1}$).]  Consider a committee $C'$ that contains
    $z_0$ and does not contain $z_1$.  If this committee does not
    contain at least a single vertex candidate for each color, then
    its dissatisfaction is (much) larger than~$2\dissat$. For example,
    let us assume that $C'$ does not contain vertex candidate of color
    $i$. Then, $s-(h+1)>1$~of the vertex selection voters
    corresponding to color~$i$ are dissatisfied by at least~$\dissat$.
    Thus let us assume that $C'$ contains at least one vertex
    candidate for each color. Then, if $C'$ does not contain~$z_2$,
    then it has dissatisfaction (much) greater than $2\dissat$ due to
    the safe committee voters.  In summary, if a committee contains
    $z_0$, does not contain $z_1$, and causes dissatisfaction lower
    than $\dissat+4$, then it must contain $z_2$ and a vertex
    candidate of each color.

  \item[Case 5 (non-independent set committees).] Finally, let $C'$ be a
    committee of the form $\{z_0,z_2\} \cup V'$, where $V'$ contains
    vertices for each color, but these vertices do not form an
    independent set. Such a committee causes dissatisfaction at
    least~$\dissat+4$.  The special candidate voters have
    dissatisfaction zero, the safe committee voters have
    dissatisfaction $h((s+1)\cdot s/2+6d)$, the vertex selection
    voters have dissatisfaction $h(s-1)\cdot s/2$, and the independent
    set voters have dissatisfaction at least least~$8r-6hd+4$. We have
    analyzed the dissatisfactions of the first three groups of voters
    when considering the independent set committees; the calculations
    are the same. Let us, thus, consider the final group of voters.
    Let~$q$ be the number of edges between vertices from~$V'$. There
    are $q$~edges that are covered twice (i.e.,\,by two vertices from~$V'$),
    $hd-2q$~edges that are covered once, and all remaining edges are uncovered.
    The total dissatisfaction of the independent
    set voters is at least $8r-6(hd-2q)-8q=8r-6hd+4q$. Since $V'$~is
    not an independent set, we have $q\ge1$ and the claim follows.

  \end{description}

  \noindent\textbf{Correctness of the Reduction.} \quad The correctness easily follows from the
  above discussion. On the one hand, if graph~$G$ does not contain an
  $h$-colored independent set, then the safe committee is the only
  winning committee with total dissatisfaction~$\dissat$ and every
  other committee has dissatisfaction at least~$\dissat+4$. Thus, a
  single swap cannot change the set of winning committees. On the
  other hand, if graph~$G$ does contain an $h$-colored independent
  set, then the safe committee is not a unique winning committee. It
  is easy to verify that then the safe committee does not win anymore
  if one swaps candidate~$z_2$ with some candidate~$x_i$ in some vote
  from the safe committee group.
\end{proof}

In fact, the proof of Theorem~\ref{thm:bcc-hard} implies much more
than stated in the theorem. In particular, our construction shows that
the problem remains $\np$-hard even if we are given the current
winning committee as part of the input.  Furthermore, the same
construction implies that deciding whether a given candidate belongs
to some $\bordacc$ winning committee is both $\np$-hard and
$\conp$-hard (the $\np$-hardness result is sometimes taken for granted
in the literature, but has not been shown formally yet; see, e.g.,
Footnote~4 in the work of Bredereck et
al.~\cite{bre-fal-nie-tal:c:multiwinner-sb}).
Formally, we consider the following problem.

\begin{definition}
  In the $\bordacc$ \textsc{Member} problem we are given an election $E = (C,V)$,
  a committee size~$k$, and a distinguished candidate $c^* \in C$.
  We ask whether candidate~$c^*$ belongs to some $\bordacc$ winning committee for
  election~$E$ and committee size~$k$.
\end{definition}

Regarding the $\bordacc$ \textsc{Member} problem, we obtain an even
stronger result than implied by Theorem~\ref{thm:bcc-hard} and we show
that it is $\theta^p_2$-complete (the proof of this result is
deferred to the appendix).
Inuitively, the class $\theta^p_2$ contains those problems that can be
solved in polynomial time, provided that one can ask polynomially-many
non-adaptive queries to an $\np$ oracle (by asking non-adaptive
queries, we mean that the algorithm first computes all the instances
of the $\np$ problems that it wants to have solved, and then receives
answers for all of them at the same time).  Problems that are
$\theta^p_2$-complete are---seemingly---harder than the $\np$-complete
ones, but easier than $\np^\np$-complete or $\conp^\np$-complete ones.
For more details on $\theta^p_2$ and many other complexity classes,
see, e.g., the textbook of Hemaspaandra and
Ogihara~\cite{hem-ogi:b:companion}.

\begin{theorem}\label{thm:bccmember}
  $\bordacc$ \textsc{Member} $\theta^p_2$-complete.
\end{theorem}

We conclude this section by showing that the \textsc{Robustness
  Radius} problem is $\np$-hard for STV, even if we consider its
single-winner variant (i.e.,\,if we fix the committee size to be $1$)
and consider exactly one swap.

\begin{theorem}\label{thm:stv-hard}
  STV \textsc{Robustness Radius} is $\np$-hard even for $k = 1$ and $B
  = 1$.
\end{theorem}
\begin{proof}
  We give a reduction from STV \textsc{Winner Determination}---the
  problem of deciding whether a given candidate is an STV winner in a
  given election. This problem is known to be
  $\np$-hard~\cite[Theorem~4]{con-rog-xia:c:mle} for the committee size~$k=1$.
  Let $I$ be an instance of the \textsc{STV Winner Determination} problem. In
  $I$ we are given an election $E = (C, V)$ with $n$ voters, and a distinguished
  candidate $c \in C$; we ask if there exists a valid run of STV such that $c$
  becomes a winner in $E$. Without loss of generality, we can assume that $c$ is
  ranked first by some voter.

  Based on $I$, we construct an instance $I'$ of the STV
  \textsc{Robustness Radius} problem as follows. We fix the new set of
  candidates to be $C' = C \cup \{d\}$; here $d$ is a dummy candidate
  needed by our construction. For each voter $v \in V$, we put $d$ in
  $v$'s preference ranking right behind $c$, and add two copies of
  such a modified vote to $I'$; we call such votes
  non-dummy. Additionally, we add $2n+1$~dummy voters who rank~$d$
  first, $c$ second, and all remaining candidates next (in some
  fixed arbitrary order). Candidate $d$ is the unique winner in this
  election as he or she is ranked first by the majority of the
  voters. If we want to change the outcome of the election with a
  single swap, then we definitely need to swap $c$ and $d$ in the
  preference order of one of the dummy voters (otherwise $d$ would
  still have the majority of first-place votes). Let us consider such
  a modified election and call it $E''$.

  Observe that if $c$ is a possible winner in $I$, then $c$ is also a
  possible winner in $E''$. Indeed, STV may first eliminate all the
  candidates except for $c$ and $d$. In such a truncated profile,
  there would be $2n+1$ voters who prefer $c$ to $d$ and $2n$ voters
  who prefer $d$ to $c$; hence $c$ would become a winner.

  If $c$ is not a possible winner in $I$, then $c$ will be eliminated
  before some other candidate from $C \cup \{d\}$ in every possible
  run of STV on $E''$. Indeed, in each sequence of eliminations
  performed by STV, either there will be a moment where $c$ is
  eliminated as one of several candidates with a given (lowest) number
  of first-place votes or there will be a moment when there are still
  some remaining candidates in $C \setminus \{c\}$ and each such
  candidate is ranked first by at least two more non-dummy voters than
  $c$; as a result each such candidate will be ranked first by more
  (dummy and non-dummy) voters than $c$. In particular, $c$ will be
  removed from the election before some candidate from
  $C \setminus \{c\}$, and, so, also before $d$. After $c$ is removed
  from $E''$, there will be at least $2n+1$ voters who rank $d$ first
  (recall that there is at least one voter in $E$ who ranks $c$ first
  and, so, there are at least two non-dummy voters who rank $c$ first
  and $d$ second) and, so, $d$ is the unique winner of the
  election. Consequently, we have shown that the outcome of election
  $E'$ can change with a single swap if and only if the answer to the
  original instance $I$ is ``yes.'' This completes the proof.
\end{proof}

\section{Parameterized Algorithms for the Robustness Radius Problem}\label{sec:fpt-algo}

We complement our discussion of the complexity of the
\textsc{Robustness Radius} problem by providing several $\fpt$ algorithms for it.
Recall that an $\fpt$ algorithm for a given parameter (e.g., the
number of candidates or the number of voters) is an algorithm whose
running time is of the form $f(\rho)|I|^{O(1)}$, where $\rho$~is the value of
the parameter and $|I|$ is the length of the encoding of the input
instance.

First, using the standard approach of formulating integer linear programs and
invoking the algorithm of Lenstra~\cite{len:j:integer-fixed}, we find
that \textsc{Robustness Radius} is in $\fpt$ when parameterized by the
number of candidates (the proof is implicit, e.g., in the works of
Dorn and Schlotter~\cite{dor-sch:j:parameterized-swap-bribery} and
Knop et al.~\cite{kno-kou-mni:c:nfold-bribery}).

\begin{proposition}
  \textsc{Robustness Radius} for $k$-Copeland, NED, STV, and $\bordacc$
  is in $\fpt$ when parameterized by the number of candidates.
\end{proposition}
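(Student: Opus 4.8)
The plan is to reduce each instance to a bounded family of integer linear programs (ILPs) whose number of variables is a function of the number $m$ of candidates, and to decide each ILP via Lenstra's algorithm~\cite{len:j:integer-fixed}. The common skeleton is the following. As there are at most $m!$ distinct preference orders, I would group the voters of $E$ by their order and write $\hat{n}_\sigma$ for the number of voters whose vote is $\sigma$. For each ordered pair of orders $(\sigma,\tau)$ introduce an integer variable $N_{\sigma\tau}\ge 0$ counting the voters turned from $\sigma$ into $\tau$; then the conservation constraints $\sum_\tau N_{\sigma\tau}=\hat{n}_\sigma$ and the budget constraint $\sum_{\sigma,\tau}\operatorname{dist}(\sigma,\tau)\,N_{\sigma\tau}\le B$ are linear, where $\operatorname{dist}(\sigma,\tau)$ is the (constant) swap distance between $\sigma$ and $\tau$. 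The number of voters of type $\tau$ in the modified election $E'$ is then the linear expression $n'_\tau=\sum_\sigma N_{\sigma\tau}$, and there are only $(m!)^2$ variables $N_{\sigma\tau}$. What remains, rule by rule, is to express $\calR(E',k)\ne\calR(E,k)$ by additional linear constraints, after branching over a set of guesses whose size is again bounded in $m$.

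For $k$-Copeland and NED the winning committees depend only on the majority graph, so I would guess the majority graph of $E'$ (one of at most $2^{\binom{m}{2}}$ graphs). A guess fixes, for every pair $(c,d)$, on which side of $n/2$ the quantity $\sum_{\tau:\,c\succ d \text{ in } \tau} n'_\tau$ lies, which is a linear constraint, and it lets us read off the resulting winning set and verify directly that it differs from $\calR(E,k)$; I keep only such guesses and solve the associated ILP. For $\bordacc$ the crucial observation is that the total dissatisfaction of a fixed committee $W$ is the linear expression $\sum_\tau n'_\tau\,\delta(\tau,W)$, where $\delta(\tau,W)$ is the constant dissatisfaction a voter of type $\tau$ suffers from $W$. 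To force a change I branch over the committee $W$ whose status flips together with the direction of the flip: to push $W$ into the winning set I require its dissatisfaction to be minimal, i.e.\ at most that of each of the $\binom{m}{k}$ committees; to push $W$ out I additionally guess a witness committee of strictly smaller dissatisfaction. In either branch the original status of $W$ in $E$ is computed once, the budget constraint is appended, and Lenstra's algorithm is invoked.

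STV is the main obstacle, since its outcome is produced by an elimination-and-selection process with parallel-universe tie-breaking rather than by a single score vector. My plan is to guess a full run transcript $\pi$, i.e.\ a sequence of at most $m$ actions, each selecting or eliminating one currently remaining candidate; there are only $f(m)$ such transcripts. For every selection round of $\pi$ I add auxiliary variables recording, per vote type, how many of its voters are consumed by that round's quota, so that the running plurality score of each remaining candidate at each round becomes a linear expression in the $n'_\tau$ and these auxiliaries. The conditions ``the selected candidate reaches the quota $q$'', ``the eliminated candidate has the smallest score'', and ``exactly $q$ voters are removed'' then turn into linear constraints, so the realizability of a fixed $\pi$ is itself an ILP with a number of variables bounded in $m$ (namely $(m!)^2+O(k\cdot m!)$). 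Making some $W\notin\calR(E,k)$ a possible winner is handled by trying every transcript whose selections are exactly $W$, which is a clean existential question.

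The genuinely hard part is the opposite direction, namely forcing a committee that is currently one of several tied winners to stop winning: ``no valid run of STV on $E'$ elects $W$'' is a universal statement over transcripts and over quota assignments, not an existential one, and is therefore not directly an ILP feasibility question. I would attack it by branching once more---for each transcript that selects $W$, guessing a single round-condition (a leader, loser, or quota-size condition) that is to be violated and appending the linear constraint that blocks that round---so that the universal requirement collapses into finitely many ILPs; the delicate point, which is where I expect most of the work, is to arrange the blocking so that the violated condition does not depend on the still-free quota-assignment variables. Since every branch produces an ILP with a number of variables bounded by a function of $m$, and the number of branches (majority-graph guesses, committee and witness guesses, transcripts, and blocked rounds) is likewise bounded by a function of $m$, solving each by Lenstra's algorithm gives the claimed $\fpt$ algorithm parametrized by the number of candidates.
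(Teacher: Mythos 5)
Your overall strategy is exactly the one the paper invokes: the paper's own ``proof'' is a one-line appeal to the standard technique of encoding the problem as boundedly many integer linear programs with $f(m)$ variables and calling Lenstra's algorithm, and your voter-type variables $N_{\sigma\tau}$ with conservation and swap-distance budget constraints are the standard implementation of that technique. For $k$-Copeland, NED, and $\bordacc$ your constructions are complete and sound: both rules of the first pair depend only on the (weak) majority relation, so guessing it and adding the corresponding threshold constraints works (one cosmetic slip: you need three-valued guesses per pair---majority, tie, reversed majority---so the count is $3^{\binom{m}{2}}$ rather than $2^{\binom{m}{2}}$, still bounded in $m$); for $\bordacc$, branching over the committee whose winning status flips, in both directions, with a witness committee in the ``stops winning'' direction, correctly captures $\calR(E',k)\neq\calR(E,k)$, since committee scores are linear in the type counts.

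The genuine gap is the one you yourself flag, in the STV case, and it is not a detail: making an old winner $W$ stop winning requires that \emph{no} valid run of STV on $E'$ elects $W$, which is a universal statement over both transcripts and the per-round choices of which $q$ voters are removed. Your proposed repair---guess, for each transcript electing $W$, one round-condition to violate---is unsound as stated, because for a fixed bribery and fixed transcript, different removal assignments may each violate a \emph{different} condition; then every run is blocked, yet no single guessed condition is violated under all removal assignments, and conversely a constraint of the form ``condition $j$ fails'' still contains the existentially quantified removal variables, so it cannot simply be negated inside the ILP. To close this you must first eliminate the removal variables: project the per-transcript system onto the $n'_\tau$-space (Fourier--Motzkin), argue that for integer type counts the removal system (a transportation-like structure, round by round) is integer-feasible whenever it is LP-feasible so that the projection faithfully describes realizability over the integers, and only then negate one projected inequality per transcript. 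That projection-plus-integrality argument is the missing idea; without it the ``stops winning'' direction for STV is not established. It is worth noting that the paper's one-line proof glosses over exactly the same difficulty, so your write-up, which at least isolates it, is more informative than the original---but as a proof it is incomplete at precisely this point.
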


\noindent
For STV and $\bordacc$ we have fixed-parameter tractability not only
with respect to the number $m$ of the candidates, as mentioned above,
but also with respect to the number $n$ of the voters. For the case of
STV, we assume that the committee size $k$ is such that we never need
to ``delete non-existent voters'' and we refer to committee sizes
where such deleting is not necessary as \emph{normal}. For example,
committee size $k$ is not normal if $k > n$ (where $n$ is the number
of voters). Another example is to take $n=12$ and $k=5$: We would need to
delete $q = \lfloor \frac{12}{5+1}\rfloor+1 = 3$ voters for each
committee member, which would require deleting ``$15$ voters out of $12$.''


\begin{theorem}
For normal committee sizes, STV \textsc{Robustness Radius} is in $\fpt$ when parameterized by the number~$n$ of the voters.
\end{theorem}

\begin{proof}
  Let $E = (C,V)$ be the input election and let $k$ be the size of the
  desired committee. Let~$n = |V|$ be the number of voters. Since
  $k$ is normal, we have that $k \leq n$. For each candidate~$c$, we
  define $\rank(c) := \min_{v \in V} (\pos_v(c))$, which we refer to
  as the \emph{rank of $c$} (intuitively, the rank of candidate~$c$ is the
  highest position on which $c$~appears in the profile).

  First, we prove that a candidate with a rank higher than $n$ cannot
  be a member of a winning committee. For the sake of contradiction,
  let us assume that there exists a candidate $c$ with $\rank(c) > n$
  who is a member of some winning committee $W$. When STV adds some
  candidate to the committee (this happens when the number of voters
  who rank such a candidate first matches or exceeds the quota
  $\lfloor \frac{n}{k+1}\rfloor +1$), it removes this candidate and at
  least one voter from the election. Thus, before $c$ were included in
  $W$, STV must have removed some candidate $c'$ from the election
  without adding it to $W$ (this is so because $c$ had to be ranked
  first by some voter to be included in the committee; for $c$ to be
  ranked first, STV had to delete at least $n$ candidates, so by the
  assumption that the committee size is normal, not all of them could
  have been included in the committee). Whenever STV eliminates a
  candidate, it always chooses one with the lowest Plurality score.
  Since at the moment when $c'$ was removed the Plurality score of $c$
  was equal to zero, we have that the Plurality score of $c'$ also
  must have been zero.  Consequently, removing $c'$ from the election
  did not affect the top preferences of the voters and, so, right
  after removing $c'$, STV removed another candidate with zero
  Plurality score. By repeating this argument sufficiently many times,
  we conclude that $c$ must have been eventually eliminated, and, so,
  could not have been added to~$W$. This gives a contradiction and
  proves our claim.

  Second, by analogous reasoning, we also conclude that the number of
  committees winning according to STV is bounded by a function of $n$:
  Let us analyze the first step of STV. Either there will be some
  candidate that meets the quota and STV will include him or her in
  the committee and it will  remove at least one of the voters while doing
  so, or none of the candidates will meet the quota. In the latter
  case, in the following steps STV will remove all candidates that are
  not ranked first by any voter. In the former case, it will repeat an
  analogous step. Eventually, after at most $n$ steps, it will either
  complete, or it will remove all but at most $n$ candidates. Then it
  will certainly finish within the next at most $n$ steps.
  As a consequence of this reasoning, one can also verify that there
  is an FPT algorithm (parameterized by the number of voters) that
  outputs all winning committees for a given STV election. Thus we can
  test in FPT time if a given sequence of swaps has led to changing
  the result of our election or not.

  Third, we observe that the robustness radius for our election is at
  most $n^2$. Indeed, we can take a member of a winning committee and
  with at most $n^2$ swaps we can push him or her to have rank $n+1$
  or higher.
  Such a candidate no longer belongs to any winning committee and, so,
  the outcome of the election is changed. From now on we focus on
  sequences of at most $n^2$ swaps.

  Fourth, we observe that in order to change the outcome of an
  election, we should only swap such pairs of candidates that at least
  one candidate in the pair has rank at most $n^2 + n$. Indeed,
  consider a candidate $c$ with $\rank(c) > n^2 + n$. After $n^2$
  swaps, the rank of this candidate would still be above $n$, so he or
  she still would not belong to any winning committee (indeed, as
  without the shifts, the candidate would be eliminated in the initial
  set of rounds, when the candidates with no first-place votes are
  eliminated). Thus, a swap of two candidates with ranks higher than
  $n^2 + n$ does not affect the set of winning committees (the exact
  positions of these two candidates have no influence on the STV
  outcome).

  As a result, it suffices to focus on the candidates with ranks at
  most $n^2 + n$. There are at most $n(n^2 + n)$ of them and,
  consequently, there are at most $(2n^3 + 2n^2)^{n^2}$ possible
  $n^2$-long sequences of swaps which we need to check in order to
  find the shortest one that guarantees the result change. For each
  sequence of swaps, we test in $\fpt$ time whether the election
  outcome changes.
%
This completes the proof. 
\end{proof}


The algorithm for the case of $\bordacc$ is more involved. Briefly
put, it relies on finding in $\fpt$ time (with respect to the number
of voters) either the unique winning committee or two committees tied
for victory. In the former case, it combines brute-force search with
dynamic programming, and in the latter case, either a single swap or a
greedy algorithm suffice. For clarity, we start with presenting the
first phase, that is, finding the unique winning committee or two tied
committees, as a separate proposition.

\begin{proposition}\label{pro:cc-unique-winner}
  There is an algorithm that runs in $\fpt$-time with respect to the
  number of voters and, given an election $E = (C,V)$ and a committee
  size~$k$, checks whether the election has a unique $\beta$-CC winning
  committee (in which case it outputs this committee) or whether there is
  more than one $\beta$-CC winning committee (in which case it outputs
  some two winning committees).
\end{proposition}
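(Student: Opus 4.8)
The plan is to exploit the fact that with few voters, the space of ``relevant'' committee scores collapses to something enumerable. The central observation is that the $\beta$-CC score of a committee $S$ is determined entirely by, for each voter $v$, the Borda score of the top-ranked member of $S$ in $v$'s ranking. Since there are only $n$ voters and each contributes a value in $\{0,1,\dots,m-1\}$, the total dissatisfaction (equivalently, the score) is a sum of $n$ such per-voter values. The key structural idea I would pursue is that a winning committee is characterized by its \emph{representation pattern}: the function assigning to each voter the identity (or at least the position) of the candidate representing that voter. First I would argue that an optimal committee can always be taken to represent each voter by one of that voter's top-$k$ candidates, or more carefully, that the representative of any voter in an optimal committee lies in a bounded-size set of candidates. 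This is because a committee of size $k$ can, for any fixed voter, contain at most $k$ candidates, so the representative of $v$ is among $v$'s $k$ highest-ranked committee members' top one; the point is that only candidates appearing in some voter's top positions can ever serve as representatives in a score-optimal way.

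The main reduction I would carry out is to guess, for each voter $v$, the Borda score $\sigma_v \in \{0,1,\dots,m-1\}$ that $v$ contributes under the unknown optimal committee, i.e., the ``representation level'' of $v$. Naively this has $m^n$ choices, which is not $\fpt$ in $n$ alone. To fix this, I would instead guess, for each voter $v$, \emph{which other voter} (or which group) shares $v$'s representative, partitioning the $n$ voters into groups that are collectively represented by a single committee member. The number of set partitions of $n$ voters is the Bell number $B_n$, which depends only on $n$. For each such partition into at most $k$ groups, each group must be represented by a common candidate $c$ ranked above every other committee member by all voters in the group; to minimize total dissatisfaction one picks, for each group, the candidate maximizing the sum of Borda scores over that group subject to being the top committee member for exactly those voters. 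This yields a candidate assignment, and summing gives a candidate score for that partition. Iterating over all partitions into at most $k$ parts and computing the best candidate for each part in polynomial time gives the minimum achievable total dissatisfaction, hence the winning score.

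Having computed the optimal score $\Delta^\star$, the remaining task is to decide uniqueness versus multiplicity and to extract witnesses. Here I would enumerate, again over all partitions of the voters into at most $k$ groups, all committees realizing $\Delta^\star$; since each group's optimal representative set is computable and the representatives across groups together form the committee (padded arbitrarily with any harmless candidates if fewer than $k$ distinct representatives arise), I can recover the actual winning committees rather than merely their score. If exactly one committee attains $\Delta^\star$, output it; if two distinct committees attain it, output both. The bookkeeping to turn ``optimal representation patterns'' into ``actual distinct size-$k$ committees'' requires care: several partitions may yield the same committee, and a committee attaining the optimum may admit representatives not forced by the partition, so I must canonicalize.

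The hard part will be this last canonicalization step, namely correctly counting \emph{distinct committees} (not distinct representation patterns) that achieve the optimal score, and in particular reliably detecting the boundary between ``exactly one'' and ``at least two.'' Over-counting from multiple partitions mapping to one committee, and the freedom to fill unused committee slots with candidates that do not change any voter's representative, both threaten a naive enumeration. I expect to handle this by arguing that padding candidates can be chosen canonically (e.g., lexicographically smallest dummies that never improve any voter's representation) and that it suffices to detect two committees differing in their \emph{effective} representative sets; once two genuinely distinct optimal committees are found the algorithm halts, so I need only run the partition enumeration until either it exhausts (yielding uniqueness) or a second distinct committee surfaces. The running time is $B_n \cdot \mathrm{poly}(m,n)$, which is $\fpt$ in $n$.
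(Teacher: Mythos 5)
Your proposal takes essentially the same route as the paper's proof: enumerate all partitions of the $n$ voters into at most $k$ representation classes (an $\fpt$-sized family), take a Borda winner of each restricted election $(C,V_i)$ as that class's representative, and compare the resulting committees across partitions. The one step you leave open---reliably detecting ``at least two'' optimal committees without enumerating or canonicalizing all of them---is closed in the paper by a simple device: for a given partition, if the per-class Borda-winner sets $B_1,\dots,B_k$ are not pairwise-distinct singletons (this subsumes both ties within a class and your ``fewer than $k$ effective representatives'' case), one directly constructs \emph{two} distinct size-$k$ committees each intersecting every $B_i$; both have true $\beta$-CC score at least the partition's assignment score, so storing such pairs and finally comparing the actual $\beta$-CC scores of all stored committees (polynomially many per partition) settles uniqueness with no over-counting concerns. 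One small correction: your side condition that a class's representative be ``the top committee member for exactly those voters'' should simply be dropped, since it is circular and unnecessary---the unconstrained per-class Borda winner suffices because the assembled committee's true score can only exceed its assignment score.
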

\begin{proof}
  Let $E = (C,V)$ be the input election and let $k$ be the committee
  size. Let~$n = |V|$ be the number of voters. If $k \geq n$, then
  every winning committee consists of each voter's most preferred
  candidate and sufficiently many other candidates to form a committee
  of size exactly $k$. In this case the algorithm can easily provide
  the required output, so we assume that $k < n$. To avoid trivial cases,
  without loss of generality, we also assume that there are more than
  $k$~candidates.

  Our algorithm proceeds by considering all partitions of $V$ into $k$
  disjoint sets (there are at most $k^n \leq n^n$ such partitions).
  For a partition $V_1, \ldots, V_k$ the algorithm proceeds as follows
  (intuitively, the voters in each group $V_i$ are to be represented
  by the Borda winner of the election $(C,V_i)$):
  \begin{enumerate}
  \item For each election $E_i = (C,V_i)$ we compute the set $B_i$ of
    candidates that are Borda winners of $E_i$.
  \item If each $B_i$ is a singleton and all $B_i$'s are distinct,
    then we store a single committee $W = B_1 \cup \cdots \cup B_k$.
    Otherwise, it is possible to form two distinct committees, $W_1$
    and $W_2$, such that for each $B_i$, $W_1 \cap B_i \neq \emptyset$
    and $W_2 \cap B_i \neq \emptyset$;\footnote{We can form $W_1$ and
      $W_2$ as follows. First, we form set $W_0$ by taking the union
      of all singletons among $B_1, \ldots, B_k$; we know that
      $|W_0| < k$ because otherwise we would not enter this part of
      the algorithm. Then we form a new sequence of sets
      $B'_1, \ldots, B'_t$ by removing from sequence
      $B_1, \ldots, B_k$ all those sets that have a nonempty
      intersection with $W_0$. If the new sequence turns out to be
      empty, then we form $W_1$ and $W_2$ by extending $W_0$ by adding
      arbitrary candidates, but so that $W_1$ and $W_2$ are distinct
      (it is possible because there are more than $k$ candidates in
      the election). If the new sequence is not empty, then we form
      $W_1$ and $W_2$ as follows: We include all members of $W_0$ in
      both sets and, then, for each $B'_i$ we include the
      lexicographically first member of $B'_i$ in $W_1$ and the
      lexicographically last one in~$W_2$. This ensures that~$W_1$ and
      $W_2$~are distinct. If they still contain fewer than $k$
      candidates, then we extend them by including arbitrary candidates
      (but so that they remain distinct; again this is possible
      because there are more than~$k$~candidates in the election).}
    we store both~$W_1$ and~$W_2$.
  \end{enumerate}
  We check if among the stored committees there is a unique committee
  $W$ such that every other stored committee has lower $\beta$-CC
  score. If such a committee exists, then we output it as the unique
  winning committee. Otherwise, there are two stored committees, $W_A$
  and $W_B$, that both have $\beta$-CC score greater than or equal to
  that of every other stored committee. We output $W_A$ and $W_B$ as
  two committees tied for winning (if there is more than one choice
  for $W_A$ and $W_B$, then we pick one pair arbitrarily). 
\end{proof}

Before we move on to the proof of the fixed-parameter tractability of~$\beta$-CC
\textsc{Robustness Radius}, we introduce some additional notation. Let $E =
(C,V)$ be some election and let $v$ be some voter in $V$. By $\topc(v)$ we mean
the candidate ranked first by $v$. By $\topc(E)$ we mean the set $\{ \topc(v)
\mid v \in V\}$, that is, the set of candidates that are ever ranked first in
election $E$. For a committee $W$, the representative of some voter $v$ is the
member of $W$ that $v$ ranks highest. Finally, for committee $W$ and voter $v$,
we define $\reppos_v(W)$ to be the position of $v$'s representative from $W$ in
$v$'s vote.

\newcommand{\thmbccfpt}{$\beta$-CC \textsc{Robustness Radius} is in
  $\fpt$ when parameterized by the number of voters.}

\begin{theorem}\label{thm:bcc-fpt}
  \thmbccfpt
\end{theorem}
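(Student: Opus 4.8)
The plan is to prove that $\beta$-CC \textsc{Robustness Radius} is in $\fpt$ parameterized by the number $n$ of voters by splitting into two cases based on the output of Proposition~\ref{pro:cc-unique-winner}. First I would run the algorithm of that proposition to determine in $\fpt$ time whether the input election $E$ has a unique $\beta$-CC winning committee or at least two tied committees. These two scenarios require genuinely different approaches, so I treat them separately.

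**The easy case: two tied committees.** If the algorithm of Proposition~\ref{pro:cc-unique-winner} reports two distinct committees $W_A, W_B$ tied for victory, then $|\calR(E,k)| \geq 2$. Here the observation is that the robustness radius is extremely small: a single well-chosen swap almost always suffices to break the tie and thereby change $\calR(E,k)$. Concretely, since $W_A$ and $W_B$ differ, there is a candidate in one but not the other; I would argue that by a single swap we can increase the $\beta$-CC score of one committee relative to the other (e.g.\ by promoting a member of $W_A$ in some voter's ranking where that member is the representative, thereby strictly increasing $W_A$'s score while not matching the increase for $W_B$). The subtlety is that a single swap might affect both committees' scores identically; but because the committees are distinct one can find a voter and an adjacent pair witnessing an asymmetric change. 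If no single swap works, a short (constant-length) search certifies the radius, so this case is handled in polynomial time once the tie is known.

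**The hard case: a unique winner.** This is where the real work lies. Here $\calR(E,k) = \{W\}$ is a singleton, and to change the outcome we must make $W$ stop being the unique winner, i.e.\ force some committee $W'$ to tie with or beat $W$ after at most $B$ swaps. The plan is to guess the structure of the target outcome and then verify feasibility by dynamic programming. Since there are only $n$ voters, each voter's representative in any committee sits at one of the $n$ positions $\reppos_v(\cdot)$, so the ``shape'' of the dissatisfaction profile is controlled by $n$. I would guess (over $\fpt$-many choices): the target committee $W'$ up to the data that matters for scoring—specifically, for each voter, which candidate becomes its representative and at what position—using the fact that any winning committee draws its representatives from $\topc(E)$ together with limited additional structure, so the number of relevant representative-assignments is bounded by a function of $n$. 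For a fixed guess of the representative positions in both $W$ (after swaps) and $W'$ (after swaps), the problem reduces to independently and greedily computing, voter by voter, the minimum number of swaps needed to realize the guessed representative position while respecting that a candidate placed into $W'$ must not simultaneously over-improve $W$'s score. This per-voter minimization combines with a global budget via dynamic programming over the $n$ voters, tracking the total swap cost.

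**The main obstacle** will be correctly bounding the set of target committees $W'$ one must consider: a priori there are exponentially many candidates, and $W'$ could contain candidates never ranked near the top, so one cannot simply enumerate committees. The key structural lemma I expect to need is that it never helps to move a ``deep'' candidate into contention—any candidate whose best position across all voters is large cannot become a representative cheaply, and candidates that are never a representative of any voter contribute nothing to the $\beta$-CC score—so the representatives of an optimal $W'$ can be confined to a set of size bounded in terms of $n$ (and the committee is then padded arbitrarily with cheap candidates). Establishing this reduction from ``committees'' to ``$\fpt$-many representative-assignments'' is the crux; once it is in place, the greedy per-voter swap computation and the dynamic programming over voters are routine and clearly run in $f(n)\cdot|E|^{O(1)}$ time.
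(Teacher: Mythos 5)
Your proposal follows the paper's overall skeleton (run Proposition~\ref{pro:cc-unique-winner}, then split on unique winner vs.\ tie), but both branches have genuine gaps. First, the tied case. You claim a single well-chosen swap (or a ``constant-length search'') settles it, but this fails exactly in the sub-case that requires real work: when every voter has the \emph{same} representative under $W_A$ and $W_B$. In that sub-case one can show every voter is represented by his or her top-ranked candidate, so all representatives sit at position~$1$ and your proposed move (promoting a representative of $W_A$) is impossible; moreover the winning committees are then exactly the size-$k$ supersets of $R=\topc(E)$, so the outcome changes only when $\topc(E)$ itself changes. The resulting radius is not constant: take $|R|=k-1$ candidates, each ranked first by about $n/(k-1)$ voters, with every vote listing all of $R$ in its top $k-1$ positions; then changing $\topc(E)$ needs either about $k-1$ swaps (to lift a non-$R$ candidate to some top) or about $n/(k-1)$ swaps (to clear some $c\in R$ from all tops), and with $k-1\approx\sqrt{n}$ the radius is about $\sqrt{n}$. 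The paper handles this by proving the characterization $R=\topc(E)$ and then minimizing the cost over two polynomial-size families of ``actions'' that alter $\topc(E)$; nothing in your proposal computes such minima.

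Second, the unique-winner case rests on an unproven lemma that you yourself flag as the crux. Guessing, per voter, the \emph{identity} of its representative gives $m^n$ options, which is not $\fpt$ in $n$; to repair this you posit that the representatives of an optimal $W'$ can be confined to a candidate set of size bounded by a function of $n$ alone. That lemma is precisely what is in doubt: the robustness radius can grow with $m$ (already for $k=1$, i.e., Borda, there are $2$-voter elections with radius $\Theta(m)$), so ``deep candidates cannot become representatives cheaply'' does not exclude them from optimal solutions --- any depth threshold would have to involve the budget, which is not bounded in $n$ --- and when $k<n$ some groups of voters sharing a representative necessarily contain several voters, which complicates exchange arguments. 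The paper's proof shows this lemma is unnecessary: it enumerates only the \emph{partition} of the voters into $k$ groups (at most $k^n\le n^n$ choices, the sole source of the $\fpt$ bound), and for each partition the representative of each group is not guessed but \emph{optimized over all of $C$} by a polynomial-time dynamic program (the functions $g$ and $f$), with the constraint $W'\neq W$ enforced by excluding one member $w\in W$ at a time from the candidate pool. In short, the enumeration that would make your algorithm $\fpt$ is exactly the piece you did not supply, and the paper's construction is designed to avoid ever needing it.
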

\begin{proof}
  Let $E = (C,V)$ be the input election and let $k$ be the committee
  size. Let $m = |C|$ be the number of candidates. Using
  Proposition~\ref{pro:cc-unique-winner}, we check whether there is a
  unique $\beta$-CC winning committee in $E$. Depending on the result,
  we proceed by distinguishing whether there is a unique winning committee or
  nor.\smallskip

  \noindent{\bfseries There is a unique winning committee $\boldsymbol{W}$.}
  We first describe a function that encapsulates the effect of
  shifting forward a particular candidate within a given set of votes.
  For each voter $v$, each candidate $c$, and each nonnegative integer~$b$,
  we define $\shift(v,c,b)$ to be the vote obtained from that of
  $v$ by shifting $c$ by $b$ positions forward, and we define:
  \[
     g(v,c,b) = \beta_{m}(\pos_{\shift(v,c,b)}(c)) - \beta_m(\reppos_{\shift(v,c,b)}(W)).
  \]
  In other words, $g(v,c,b)$ is the difference between the Borda
  scores of $c$ and the highest-ranked member of $W$ in vote $v$ with
  $c$ shifted $b$ positions forward.

  Let $V'$ be some subset of voters, and let us rename the voters so
  that $V' = \{v_1, \ldots, v_{n'}\}$. For each candidate $c$ and each
  nonnegative integer $b$, we define:
  \[
    g(V',c,b) = \max\biggl\{ \sum_{i=1}^{n'}g(v_i,c,b_i) \,\biggl|\, b_1, \ldots, b_{n'} \geq 0 \text{ and } b_1 + \cdots + b_{n'} = b \biggr\}.
  \]
  Intuitively, $g(V',c,b)$ specifies how many points more $c$~would
  receive from the voters in $V'$ as their representative than these
  voters would assign to their representatives from $W$, if we shifted
  $c$ by $b$~positions forward in an optimal way. 

  We assume that $g(\emptyset, c,b) = 0$ for each choice of $c$ and
  $b$. We can compute $g(V',c,b)$ in polynomial time using dynamic
  programming and the following formula (for each $1 \leq t <
  n'$):\footnote{In fact, it is possible to compute $g(V',c,b)$ using
    a greedy algorithm, but the dynamic programming formulation is far
    easier and allows us to sidestep many special cases, such as what
    happens if $c$ is him or herself a member of $W$.}
  \begin{align*}
    g(\{v_1, \ldots, v_{t}\},c,b) = \max_{0 \leq b_{t} \leq b} g(\{v_1, \ldots, v_{t-1}\},c,b-b_{t}) + g(v_{t},c,b_{t}).
  \end{align*}

  With the function $g$ in hand, we are ready to describe the algorithm.
  We consider every partition of $V$ into $k$ disjoint subsets $V_1,
  \ldots, V_k$; let us fix one such partition. Our goal is to compute
  the smallest nonnegative integer $b$ such that there is 
  a sequence of nonnegative integers $b_1, \ldots, b_k$ that adds
  up to $b$, and a sequence
  $c_1, \ldots, c_k$ of (not necessarily distinct) candidates so that:
  \begin{enumerate}
  \item[(a)] $g(V_1,c_1,b_1) + \cdots + g(V_k,c_k,b_k) \geq 0$,
  \item[(b)] there is a committee $W'$ such that $\{c_1, \ldots, c_k\}
    \subseteq W'$ and $W' \neq W$.
  \end{enumerate}
  Intuitively, the role of candidates $c_1, \ldots, c_k$ is to be the
  representatives of the voters from the sets~$V_1, \ldots, V_k$,
  respectively, in a new committee $W'$, distinct from $W$, that
  either defeats $W$ or ties with it. More formally, condition~(a)
  ensures that there is a way to perform $b = b_1 + \cdots + b_k$
  swaps so that the score of committee $W'$ is at least as large as
  that of $W$, and condition (b) requires that $W' \neq W$ and deals
  with the possibility that candidates in $c_1, \ldots, c_k$ are not
  distinct.

  To compute $b$, we will need the following function ($C'$ is a
  subset of candidates---we will end up using only polynomially many
  different ones---$i$ is an integer in $[k]$, and $b$ is a
  nonnegative integer):
  \[
  f(C',i,b) = \max\biggl\{ \sum_{j=1}^i g(V_j,c_j,b_j) \,\biggl|\, 
  c_1, \ldots, c_i \in C', b_1, \ldots, b_i \geq 0, b_1 + \cdots + b_i = b\biggr\}.
  \]
  We have that the smallest value of $b$ such that $f(C',k,b) \geq 0$ is
  associated with candidates $c_1, \ldots, c_k$ and values $b_1,
  \ldots, b_k$ that satisfy condition (a) above, under the condition
  that $c_1, \ldots c_k$ belong to~$C'$. To obtain the smallest value
  of $b$ that is associated with values $b_1, \ldots, b_k$ and $c_1,
  \ldots, c_k$ that satisfy both conditions (a) and (b) above, it
  suffices to compute:
  \[
    b_{V_1, \ldots,V_k} = \min\{b \in \naturals \mid w \in W \land f(C-\{w\},k,b) \geq 0 \}.
  \]
  The fact that we use sets of the form $C - \{w\}$ in the invocation
  of function~$f$ ensures that we obtain committees distinct from $W$.
  The fact that we try all $w \in W$ guarantees that we try all
  possibilities. The smallest value~$b_{V_1,\ldots,V_k}$ over all
  the partitions of $V$ is the smallest number of swaps necessary to
  change the outcome of the election.

  It remains to show that we can compute function $f$ in polynomial
  time. This follows by assuming that $f(C',0,b) = 0$ (for each $C'$
  and $b$) and applying dynamic programming techniques on top of the
  following formula (which holds for each $i \in [k]$):
  \[
  f(C',i,b) = \max_{0 \leq b_i \leq b,\, c_i \in C'} f(C',i-1,b-b_i) + g(V_i,c_i,b_i).
  \]
  The part of the proof where there is a unique $\beta$-CC winning committee
  for~$E$ is complete.
  
  \smallskip
  
  \noindent{\bfseries There are at least two committees that tie for
    victory.} Let $W_A$ and $W_B$ be two $\beta$-CC winning committees
  for $E$ (the algorithm from Proposition~\ref{pro:cc-unique-winner}
  provides them readily). We check if there is some voter $v$ whose
  representatives under $W_A$ and $W_B$ are distinct. If such a voter
  exists, then a single swap is sufficient to prevent one of the
  committees from winning: Let $a$ be the representative of~$v$ under
  $W_A$, and let $b$ be the representative of $v$ under
  $W_B$. Without loss of generality, we assume that $a$ is ranked higher than $b$. It
  suffices to shift $b$ one position higher. It certainly is possible
  (since $b$ was ranked below $a$, he or she certainly is not ranked
  first) and it increases the $\beta$-CC score of $W_B$, while the
  score of $W_A$ either stays the same or decreases (the score of
  $W_A$ would stay the same, e.g., if $b$ were ranked just below $a$
  and $b$ also belonged to~$W_A$; candidate~$a$ certainly does not belong to
  $W_B$ because $v$ does not have $a$ as a representative
  under~$W_B$).  In consequence, $W_A$ certainly is not a winning
  committee after the swap and, thus, the set of winning committees
  changes.
  
  Let us now consider the case where each voter has the same
  representative under both~$W_A$ and~$W_B$, and let $R$ be the set of
  voters' representatives ($R \subseteq W_A \cap W_B$). Since $W_A$
  and $W_B$ are distinct, there are candidates $a \in W_A \setminus
  W_B$ and $b \in W_B \setminus W_A$ and, in consequence, we know that
  $|R| < k$. We claim that $R = \topc(E)$, that is, that each
  representative is ranked first by some voter. For the sake of
  contradiction, let us assume that there is a voter $v$ that is not
  represented by his or her top-preferred candidate. In this case,
  committee~$W_C$ obtained from~$W_A$ by replacing candidate~$a$ with
  candidate~$\topc(v)$ has a higher score than~$W_A$ (voter~$v$ has a
  higher-ranked representative and all other voters have the same
  or higher-ranked representatives), which contradicts the fact that~$W_A$~is a
  winning committee. Thus our claim holds.

  As a consequence, the $\beta$-CC winning committees for election $E$
  are exactly those that contain all candidates from $R$. To change the election
  outcome, we have to transform~$E$ to an election~$E'$ such that $\topc(E) \neq
  \topc(E')$. We consider two types of actions that achieve this effect:
  \begin{enumerate}
  \item Shift some candidate $c \in C \setminus R$ to the top position
    of some voter $v$, thus ensuring that for the resulting election
    $E'$ we have $c \in \topc(E')$ (and, by assumption, $c \notin
    \topc(E)$). 

  \item For some candidate $d \in R$ and each voter $v$ that ranks $d$
    on top, shift the top-ranked member of $R \setminus \{d\}$ to be
    ranked first. This creates election $E'$ such that $\topc(E')$ is
    strictly contained in $\topc(E)$.
  \end{enumerate}
  Actions of the first type include the cheapest one that creates an
  election $E'$ such that $\topc(E') \setminus \topc(E) \neq
  \emptyset$, and actions of the second type include the cheapest one
  that creates an election~$E'$ such that $\topc(E) \setminus
  \topc(E') \neq \emptyset$. Thus it suffices to compute the cheapest
  action of each type (there are only polynomially many actions to
  consider) and output its cost as the smallest number of swaps
  necessary to change the outcome of the election.  
\end{proof}

It is natural to ask whether the above theorem holds for other
variants of the Chamberlin--Courant rule (i.e., for variants based on
scoring functions other than the Borda one). This issue is quite
intriguing. While the first part of the proof---where we deal with
the case of a unique winning committee---is general and works for any
scoring function (indeed, it suffices to replace the Borda scoring
function $\beta$ in the definition of function $g$ with any other
scoring rule), the situation of the second part is harder to deal
with.  Indeed, in the second part of the proof, when we consider the
case where not all voters have the same representative, we rely on the
fact that a single swap of a representative will increase the score of
a committee. This is crucial for our argument, and due to this
assumption it does not matter which two specific  winning committees
$W_A$ and $W_B$ we obtained from
Proposition~\ref{pro:cc-unique-winner}.  Without it, we would have to
be more careful in choosing them.

We conclude this section by noting that the \textsc{Robustness Radius}
problem for $k$-Copeland and NED is\ $\wone$-hard for the
parameterization by the number of voters. This follows by a simple
adaptation of a $\wone$-hardness proof of Kaczmarczyk and
Faliszewski~\cite[Theorem~7]{kac-fal:c:destructive-shift-bribery} for
Copeland$^\alpha$ \textsc{Destructive Shift Bribery} (the idea of the
adaptation is to insert sufficiently many dummy candidates between the
non-dummy ones, so that the only reasonable swaps are those that shift
the designated candidate backward). Since the proof uses an odd number
of voters, it applies to NED as well.

\begin{corollary}
  \textsc{Robustness Radius} for $k$-Copeland and NED is $\wone$-hard
  when parameterized by the number of voters.
\end{corollary}

\section{Beyond the Worst Case: An Experimental Evaluation}\label{sec:exp-eval}


In this section we present results of experiments in which we measure
how many randomly-selected swaps are necessary to change election
results under our rules.\footnote{We omit NED because we found it to
  be computationally too expensive. However, we expect the results to
  be similar to the results that we have for $k$-Copeland$^\alpha$.}

We performed a series of experiments using five distributions of
rankings---three synthetic ones and two based on real-life datasets
obtained from the Pref\-Lib~\cite{mat-wal:c:preflib} library of
real-life preference data.  Regarding the real-life data, we used the
dataset of preferences over sushi sets~\cite{kam:c:sushi} and the
dataset with preferences over university courses (treating them as
distributions by selecting votes from them uniformly at
random). Regarding the synthetic distributions, we used the following
ones (see the description below or, for a more detailed discussion and
literature overview, a book chapter by Boutilier and
Rosenschein~\cite{bou-ros:incomplete-information}):
 \begin{enumerate}[(i)]
 \item Impartial Culture (IC),
 \item Mallows model with parameter $\phi$ between $0$ and $1$
  drawn uniformly at random, and
 \item a mixture of two Mallows models with two separate values of parameters
  $\phi_1$ and $\phi_2$ drawn uniformly and independently at random.
\end{enumerate}

In the Impartial Culture model, each preference order is drawn
uniformly at random.
In contrast, the intuition behind the Mallows model is that there is a given central
preference order and the more swaps are necessary to modify some preference
order~$r$ to become this central one, the less probable it is to
draw~$r$ (in particular, the central order is the most probable one to
be generated). Formally, the Mallows model consists of a
\emph{central order}~$r_0$ of $m$~elements and a dispersion
parameter~$\phi \in (0,1]$ which quantifies the concentration of the
rankings around the peak~$r_0$ with respect to some distance measure; we use
the Kendall tau distance~\cite{Kendall38}. In particular, the probability of generating a given
ranking~$r$~is:
$$P_{r_0,\phi}(r) = \frac{\phi^{d(r,r_0)}}{Z} \text{~~~where~~~}Z = 1
\cdot (1+\phi) \cdot (1+\phi+\phi^2) \cdots (1+\cdots+\phi^{m-1}),$$
and where~$d(r,r_0)$ is Kendall tau distance between $r$ and $r_0$,
that is, the number of swaps of adjacent candidates that are necessary
to transform $r$ into $r_0$. Note that the normalization constant~$Z$
is independent of~$r_0$. For $\phi = 1$, the Mallows model becomes
equivalent to the Impartial Culture model; for $\phi = 0$ it draws the
central ranking~$r_0$ only. In the mixture of two Mallows models, we
use models with different central orders and different values of the
dispersion parameter (both drawn independently and
uniformly at random). Additionally, we draw uniformly at random a value
$p \in [0,1]$ and for each vote that we are to generate, we use the
first model with probability $p$, and the second model with
probability $1-p$.

For each of our five distributions, and for each of the voting rules
that we consider,\footnote{For $k$-Copeland$^\alpha$ we took~$\alpha=0.5$.}
we performed $2000$~simulations. In each simulation we had drawn an election
containing $10$~candidates and $30$~voters from the given
distribution.
Then we were repeatedly drawing a pair of adjacent candidates uniformly at random and
performing a swap,
until the outcome of the election changed (in fact, we never did more
than~$5000$~swaps in order to change the outcome).  The average number
of swaps required to change the outcome of an election for different
rules and for different distributions is depicted in
Figure~\ref{fig:exp_plots}. We present the results for committee size
$k = 3$. We have also performed simulations for $k = 5$ that led to
analogous conclusions.  We note that the standard deviations in our
experiments were fairly high (usually close to the value of the
reported averages, but sometimes almost twice as large as the value of
the reported average).  This means that in many elections the required
number of random swaps was, in fact, notably smaller than the provided
average, and in some elections this number was significantly above the
average.


As expected, 
the robustness radius decreases with the increase of randomness in the
voters' preferences. Indeed, one needs relatively few swaps to change
the results of elections generated using the Impartial Culture
distribution, but changing the results of elections generated
according to the Mallows model requires many more (random) swaps.  It
is interesting that the results regarding the Mallows model are
somewhat different from those for the Sushi dataset, as it is often
believed that the Mallows model captures the preference orders from
the Sushi dataset well~\cite{kam:c:sushi}. Our results give some
circumstantial evidence that there is some nontrivial difference
between the Sushi dataset and the Mallows model (which, after all, is
to be expected---it is unlikely that a simple synthetic model would
capture real-life data perfectly). In particular, based on the fairly
small radiuses of the elections generated using the Sushi
distribution, we conclude that the preferences there are rather diverse.

Among our rules, 
$k$-Borda
is the most robust one ($k$-Copeland$^\alpha$, for $\alpha=0.5$, holds the second place),
whereas
%
rules that achieve either diversity ($\bordacc$ and, to some extent,
SNTV) or proportionality (STV)
are usually more vulnerable to small changes in the input. This is aligned
with what we have seen in the theoretical part of the paper (with a
minor exception of SNTV).
For the case of $k$-Borda, indeed, we would expect that many swaps
would cancel each other out (in terms of the effect on the Borda
scores of the candidates), which explains the rule's large
robustness. The performance of Borda can also be explained by noting
that it is a maximum likelihood estimator for a noise model that is
somewhat similar to ours (see, e.g., the overview provided by Elkind
and Slinko~\cite{elk-sli:b:rationalization}).


\begin{figure}[t!]
 \center
 \resizebox{\textwidth}{!}{
\begin{tikzpicture}

\begin{groupplot}[group style={group size=5 by 1, horizontal sep=.5cm}]
\nextgroupplot[
title={Mallows},
xmin=0, xmax=400,
ymin=-0.248333333333333, ymax=1.915,
xtick={0,100,200,300,400},
ytick={0,0.333333333333333,0.666666666666667,1,1.33333333333333,1.66666666666667},
yticklabels={$k$-Borda,$k$-Copeland$^{0.5}$,Bloc,$\beta$-CC,SNTV,STV},
minor xtick={},
minor ytick={},
tick align=outside,
tick pos=left,
x grid style={lightgray!92.02614379084967!black},
y grid style={lightgray!92.02614379084967!black}
]
\draw[fill=black,draw opacity=0] (axis cs:0,-0.15) rectangle (axis cs:354.255,0.15);
\draw[fill=lightgray!66.92810457516339!black,draw opacity=0] (axis cs:0,0.183333333333333) rectangle (axis cs:235.315,0.483333333333333);
\draw[fill=black,draw opacity=0] (axis cs:0,0.516666666666667) rectangle (axis cs:134.73,0.816666666666667);
\draw[fill=lightgray!66.92810457516339!black,draw opacity=0] (axis cs:0,0.85) rectangle (axis cs:149.82,1.15);
\draw[fill=black,draw opacity=0] (axis cs:0,1.18333333333333) rectangle (axis cs:55.705,1.48333333333333);
\draw[fill=lightgray!66.92810457516339!black,draw opacity=0] (axis cs:0,1.51666666666667) rectangle (axis cs:63.0454545455,1.81666666666667);
\nextgroupplot[
title={mixed Mallows},
xmin=0, xmax=400,
ymin=-0.248333333333333, ymax=1.915,
xtick={0,100,200,300,400},
ytick=\empty,
minor xtick={},
minor ytick={},
tick align=outside,
tick pos=left,
x grid style={lightgray!92.02614379084967!black},
y grid style={lightgray!92.02614379084967!black}
]
\draw[fill=black,draw opacity=0] (axis cs:0,-0.15) rectangle (axis cs:214.24,0.15);
\draw[fill=lightgray!66.92810457516339!black,draw opacity=0] (axis cs:0,0.183333333333333) rectangle (axis cs:136.705,0.483333333333333);
\draw[fill=black,draw opacity=0] (axis cs:0,0.516666666666667) rectangle (axis cs:104.155,0.816666666666667);
\draw[fill=lightgray!66.92810457516339!black,draw opacity=0] (axis cs:0,0.85) rectangle (axis cs:87.26,1.15);
\draw[fill=black,draw opacity=0] (axis cs:0,1.18333333333333) rectangle (axis cs:29.725,1.48333333333333);
\draw[fill=lightgray!66.92810457516339!black,draw opacity=0] (axis cs:0,1.51666666666667) rectangle (axis cs:38.07,1.81666666666667);
\nextgroupplot[
title={Sushi},
xmin=0, xmax=400,
ymin=-0.248333333333333, ymax=1.915,
xtick={0,100,200,300,400},
ytick=\empty,
minor xtick={},
minor ytick={},
tick align=outside,
tick pos=left,
x grid style={lightgray!92.02614379084967!black},
y grid style={lightgray!92.02614379084967!black}
]
\draw[fill=black,draw opacity=0] (axis cs:0,-0.15) rectangle (axis cs:184.815,0.15);
\draw[fill=lightgray!66.92810457516339!black,draw opacity=0] (axis cs:0,0.183333333333333) rectangle (axis cs:92.93,0.483333333333333);
\draw[fill=black,draw opacity=0] (axis cs:0,0.516666666666667) rectangle (axis cs:57.5,0.816666666666667);
\draw[fill=lightgray!66.92810457516339!black,draw opacity=0] (axis cs:0,0.85) rectangle (axis cs:56.335,1.15);
\draw[fill=black,draw opacity=0] (axis cs:0,1.18333333333333) rectangle (axis cs:23.09,1.48333333333333);
\draw[fill=lightgray!66.92810457516339!black,draw opacity=0] (axis cs:0,1.51666666666667) rectangle (axis cs:13.17,1.81666666666667);
\nextgroupplot[
title={University Courses},
xmin=0, xmax=400,
ymin=-0.248333333333333, ymax=1.915,
xtick={0,100,200,300,400},
ytick=\empty,
minor xtick={},
minor ytick={},
tick align=outside,
tick pos=left,
x grid style={lightgray!92.02614379084967!black},
y grid style={lightgray!92.02614379084967!black}
]
\draw[fill=black,draw opacity=0] (axis cs:0,-0.15) rectangle (axis cs:189.48,0.15);
\draw[fill=lightgray!66.92810457516339!black,draw opacity=0] (axis cs:0,0.183333333333333) rectangle (axis cs:118.19,0.483333333333333);
\draw[fill=black,draw opacity=0] (axis cs:0,0.516666666666667) rectangle (axis cs:117.96,0.816666666666667);
\draw[fill=lightgray!66.92810457516339!black,draw opacity=0] (axis cs:0,0.85) rectangle (axis cs:8.27,1.15);
\draw[fill=black,draw opacity=0] (axis cs:0,1.18333333333333) rectangle (axis cs:8.21,1.48333333333333);
\draw[fill=lightgray!66.92810457516339!black,draw opacity=0] (axis cs:0,1.51666666666667) rectangle (axis cs:9.26666666667,1.81666666666667);
\nextgroupplot[
title={Impartial Culture},
xmin=0, xmax=400,
ymin=-0.248333333333333, ymax=1.915,
xtick={0,100,200,300,400},
ytick=\empty,
minor xtick={},
minor ytick={},
tick align=outside,
tick pos=left,
x grid style={lightgray!92.02614379084967!black},
y grid style={lightgray!92.02614379084967!black}
]
\draw[fill=black,draw opacity=0] (axis cs:0,-0.15) rectangle (axis cs:36.285,0.15);
\draw[fill=lightgray!66.92810457516339!black,draw opacity=0] (axis cs:0,0.183333333333333) rectangle (axis cs:23.38,0.483333333333333);
\draw[fill=black,draw opacity=0] (axis cs:0,0.516666666666667) rectangle (axis cs:23.255,0.816666666666667);
\draw[fill=lightgray!66.92810457516339!black,draw opacity=0] (axis cs:0,0.85) rectangle (axis cs:21.815,1.15);
\draw[fill=black,draw opacity=0] (axis cs:0,1.18333333333333) rectangle (axis cs:14.59,1.48333333333333);
\draw[fill=lightgray!66.92810457516339!black,draw opacity=0] (axis cs:0,1.51666666666667) rectangle (axis cs:10.61,1.81666666666667);
\end{groupplot}

\end{tikzpicture}}
 \caption{\label{fig:exp_plots} Experimental results showing
   the average number of swaps needed to change the outcome of
   random elections obtained according to the description in
   Section~\ref{sec:exp-eval}. The standard deviations are quite high, 
   on the same order as the averages themselves (and often a bit larger).
 }
\end{figure}

The results for STV call for some additional discussion. Indeed, the
robustness radius of STV turned out to be close to~$10$ in the Sushi,
University Courses, and Impartial Culture distributions, whereas for
the Mallows model it was over~$60$, and for the mixture of two Mallows
models it was just below~$40$. The results for SNTV were qualitatively
similar, wheres $\bordacc$ typically achieve much higher robustness
radiuses (e.g.,\ in the Sushi dataset its average robustness
radius was more than four times larger than that of STV; for the other
datasets---except for the University Courses dataset---it was over two times
larger). This is not completely surprising as STV cannot be easily
interpreted as a maximum likelihood
estimator~\cite{con-san:c:likelihood-estimators,con-rog-xia:c:mle}
and, as per our Example~\ref{ex:correlated}, we should expect lower
robustness radiuses from rules focused on diversity and proportional
representation. Yet, the the fact that, on average, to change the
result of an election with 30 voters and 10 candidates (committee size~$3$) we
may need only about 10 random swaps of adjacent candidates is
worrisome. In many elections---especially in the low-stake and
medium-stake ones---we would expect many voters to make small
mistakes, where they rank two adjacent candidates in an opposite order
(e.g.,\ because these voters would be tired of the ranking process, or
because they would view these two candidates as similar etc.).  As a
consequence, for small STV elections there is a danger that the
outcome is affected by very minor, hard to predict, and hard to
observe issues. Since relatively small STV elections are common in
practice (e.g.,\ the rule is used by various universities and their
departments for internal elections), this result is quite
meaningful. In particular, the organizers of such elections may wish
to check if small numbers of random swaps can change the results of
their elections and, if so and if this is feasible, they might wish to
return to discussions on the voted issues (this would, of course,
require some agreement of the voters that if the outcome is not
``clear'' in the sense of the robustness radius, then the discussions
are resumed; this would be impossible in some settings, but would be
quite acceptable in others).

The above discussion is equally applicable to the case of SNTV, but
usually when SNTV elections are conducated, the voters only submit
their top preferences and, so, computing the robustness radius in the
sense of this section would be difficult. For the case of $\bordacc$,
the test could be executed---and might be meaningful and
reasonable---but the danger of non-robust results seems to be smaller
than in the case of STV (yet, note that for the University Courses
dataset the results of $\bordacc$ are as non-robust as those of STV).

\section{Conclusions}\label{sec:conclusions}

We formalized the notion of robustness of multiwinner rules and
studied the complexity of assessing the robustness/confidence of
collective multiwinner decisions. Our theoretical and experimental
analysis indicates that $k$-Borda is the most robust among our rules,
and that proportional rules, such as STV and the Chamberlin--Courant
rule, are on the other end of the spectrum. Indeed, for these rules we
suggest that organizers of small-scale elections run tests of the
robustness of the obtained results.

Our notions of robustness have already attracted attention of other
researchers, who have, for example, studied the complexity of the
\textsc{Robustness Radius} problem for the Chamberlin--Courant rule in
more detail~\cite{mis-son:c:robustness} (e.g., by considering
structured preference profiles) or who have considered the approval
setting~\cite{mis-son:c:robustness,gaw-fal:c:robustness}.  Other
interesting research directions involve analyzing the robustness
levels of multiwinner rules in the restricted preference domains
(e.g.,\,single-peaked preferences or single-crossing preferences),
considering counting variants of our problems to assess the
probability that a given number of random swaps can change the results
(see the initial results of Gawron and
Faliszewski~\cite{gaw-fal:c:robustness}), and finding natural voting
rules with robustness levels strictly between $1$ and $k$.  A more
open-ended research direction is to seek further notions of
robustness, both for the single- and multi-winner voting settings.

\bigskip

\noindent
\textbf{Acknowledgments.}\quad We are grateful to the anonymous \emph{SAGT 2017}
reviewers for their useful comments.
Robert Bredereck was partially supported by the DFG fellowship BR 5207/2.
Piotr Faliszewski was supported by the National Science Centre,
Poland, under project 2016/21/B/ST6/01509.
Andrzej Kaczmarczyk was supported by the DFG project AFFA (BR 5207/1
and NI 369/15).
Piotr Skowron was supported by a Humboldt Research Fellowship for Postdoctoral
Researchers (Alexander von~Humboldt Foundation, Bonn) while staying at
TU~Berlin.
Nimrod Talmon was supported by an I-CORE ALGO fellowship.

\bibliographystyle{plain}
\bibliography{tailored_grypiotr.bib}

\appendix

\section{$\boldsymbol{\theta^p_2}$-Hardness of Testing Membership in a Winning
$\boldsymbol\beta$-CC Committee}

In this section we show that the $\bordacc$ \textsc{Member} problem is
$\theta^p_2$-complete. To show $\theta^p_2$-membership, we define two
auxiliary $\np$-problems, $Q_1$ and~$Q_2$:
 \begin{description}
 \item[Problem $\boldsymbol{Q_1}$:] Given an election~$(C,V)$ and an
   integer~$r$, in $Q_1$ we ask if there is a committee that has
   $\bordacc$-score \emph{greater than}~$r$.
 \item[Problem $\boldsymbol{Q_2}$:] Given an election~$(C,V)$, a
   distinguished candidate~$c^*$, and an integer~$r$, in $Q_2$ we ask
   if there is a committee that contains candidate~$c^*$ and has
   $\bordacc$-score \emph{at least}~$r$.
 \end{description}
 Note that $Q_1$ is in~$\np$ because a committee with $\bordacc$-score
 at least~$r+1$ is a polynomial-size certificate for a
 ``yes''-instance.  Analogously, a committee containing~$c^*$, with
 $\bordacc$-score at least~$r$ is a polynomial-size certificate for a
 ``yes''-instance of~$Q_2$.

 A given candidate~$c^*$ belongs to some $\bordacc$ winning committee
 for some election~$(C,V)$ if and only if there is
 some~$r \in [0,|V|\cdot(|C|-1)]$ such that $((C,V),r)$ is a
 ``no''-instance of~$Q_1$ and~$((C,V),c^*,r)$ is a ``yes''-instance
 of~$Q_2$. This can be checked by a deterministic Turing machine that
 asks $2\cdot(|V|\cdot(|C|-1)+1)$ non-adaptive queries to an
 $\np$-oracle (as required by the definition of $\theta^p_2$; see,
 e.g.,\,the textbook of Hemaspaandra and
 Ogihara~\cite{hem-ogi:b:companion}).  Thus, $\bordacc$
 \textsc{Member} is in~$\theta^p_2$.

 Inspired by the work of Fitzsimmons et
 al.~\cite{fit-hem-hoo-nar:t:hard-control}, we establish
 $\theta^p_2$-hardness using the polynomial-time many-one reduction
 from \textsc{Vertex Cover Member}~\cite{hem-spa-vog:j:kemeny}.  In
 this problem we are given an undirected graph~$G=(V(G), E(G))$ and a
 distinguished vertex~$v^*$, and we ask if there is a minimum-size
 vertex cover $V' \subseteq V(G)$ that contains~$v^*$. Hemaspaandra et
 al.~\cite{hem-spa-vog:j:kemeny} showed that \textsc{Vertex Cover Member}
 is~$\theta^p_2$-hard.

 To simplify our proof, we show that $\theta^p_2$-hardness holds even if the
 input graph is regular.
 
 \begin{lemma}
  \textsc{Vertex Cover Member} is $\theta^p_2$-complete, even if the input graph is regular.
 \end{lemma}
 
 \begin{proof}
   Given a graph~$G=(V(G), E(G))$ and a distinguished
   vertex~$v^* \in V(G)$, we extend it to a new graph~$G'$ such that
   every vertex in~$G'$ has the same degree and vertex~$v^*$ is part
   of some minimum-size vertex cover in~$G'$ if and only if it is also
   part of some minimum-size vertex cover in~$G$.
  
   Let~$d=|E(G)|$ denote the desired, common degree of the vertices
   in~$G'$ (without loss of generality we assume that $G$ is connected
   and is not a tree, so $|E(G)| \geq |V(G)|$; we also assume that
   $d > 6$). We note that prior to adding vertices and edges to $G$
   (to form $G'$), each vertex of $G$ has degree at most $d$. We will
   form $G'$ by introducing some number of new vertices and some
   edges; each new edge will either connect two new vertices or one
   new vertex and one original vertex. The sum of the degrees of the
   vertices in $G$ is $2|E(G)|$, but if each of these vertices were to
   have degree~$d$, then this sum would be $d\cdot |V(G)|$. As a
   consequence, we need to add:
   \[
     d\cdot|V(G)|-2|E(G)| = d\cdot(V(G)-2)
   \]
   edges that connect original vertices with the new ones. We set
   $t = |V(G)|-2$ and we form $t$ \emph{degree-filling gadgets} such
   that each gadget provides $d$ edges between the old and the new
   vertices. Each degree-filling gadget is constructed as follows: We
   have two sets of new vertices, $A$ and~$B$, with
   $A:=\{a_1,\dots,a_d\}$ and~$B:=\{b_1,\dots,b_{d-3}\}$.  Every
   vertex from~$B$ is connected with every vertex from~$A$ (these are
   the only edges that touch vertices from $B$ in the gadget).
   Vertices from~$A$ are connected in a cyclic way, so that there is an
   edge between $a_1$ and $a_2$, between $a_2$ and $a_3$, and so on,
   until the edge between $a_d$ and $a_1$.  Moreover, each vertex
   from~$A$ is connected to a single original vertex (in an arbitrary
   way, but ensuring that, after considering all the degree-filling
   gadgets, every original vertex has degree~$d$). Note that
   graph~$G'$ indeed contains only vertices of degree exactly~$d$.

   Let us now consider some degree-filling gadget and its minimum-size
   vertex cover. We claim that this vertex cover contains exactly $d$
   vertices. Indeed, to cover the cycle between the vertices from~$A$,
   the cover needs to include at least $d/2$ vertices
   from~$A$. Further, the cover either needs to include all
   vertices from $A$ or all vertices from $B$ (otherwise some edge
   connecting a vertex from $A$ with a vertex from $B$ would not be
   covered). By including all vertices from~$A$ we get a cover of
   size $d$, whereas by including all vertices from $B$ we get a
   cover of size at least $d/2+d-3$ (which is greater than $d$,
   provided that $d > 6$, as assumed). Thus, without loss of
   generality, we can assume that each minimum-size vertex cover of
   $G'$ uses exactly $d$ vertices (of type $A$) from each degree-filling gadget.

  

   Consequently, there is a minimum-size vertex cover, say~$S$,
   for~$G'$ that contains all vertices of type $A$ from all degree-filling
   gadgets. These vertices cover all edges that were not
   originally in $G$. Hence, the remaining vertices in~$S$ come
   from~$V(G)$ and form a minimum vertex cover of~$G$.
 \end{proof}

 Now we are ready to show $\theta^p_2$-hardness of $\bordacc$
 \textsc{Member}, that is, we provide Theorem~\ref{thm:bccmember}.

 \paragraph{Construction Idea and Candidates.}
 We give a reduction from the \textsc{Vertex Cover Member} problem for
 regular graphs to the $\bordacc$ \textsc{Member} problem.
 Let~$G=(V(G), E(G))$ be our input graph, where every vertex has
 degree~$d$, and let $v^* \in V(G)$ be the distinguished vertex.  We
 denote by $q:=|V(G)|$ the number of vertices in~$G$ and by
 $r:=|E(G)|=qd/2$ the number of edges in~$G$.  In our construction, we
 use the following four types of candidates:
 \begin{enumerate}
 \item The special \emph{bar candidate}~$b$. We form the voters in
   such a way that $b$ belongs to every winning committee.
   
 \item For every vertex~$v \in V(G)$ we introduce a \emph{vertex
     candidate}~$c(v)$. The intention is that $c(v)$ belongs to some
   winning committee if and only if $v$ belongs to some minimum-size
   vertex cover. For a set $Y$ of vertices, we write $c(Y)$ to mean
   the set of corresponding vertex candidates.

 \item For every edge~$e \in E(G)$, we introduce a set $D(e)$ of
   $(2qr)^4$ edge-$e$ candidates. The intention is that these
   candidates never belong to a winning committee, but their presence
   ensures that a winning committee must include candidates
   corresponding to a vertex cover.

 \item We introduce~$q$ \emph{filler candidates} $f_1$, 
   $\dots$, $f_{q}$. The intention is that these candidates fill-in
   the places in the winning committee that are not taken by the
   vertex candidates, in such a way that the more filler candidates a
   committee includes, the lower is its dissatisfaction score.
 \end{enumerate}
 We set the committee size to be $q+1$. The main idea of the
 construction is that every winning committee has to contain the bar
 candidate, as few of the vertex candidates as possible (but so that
 they form a vertex cover), and arbitrary filler candidates to reach
 the committee size. Let~$c(v^*)$~be the distinguished candidate.
 
 \paragraph{Voters.}
 Following Remark~\ref{rem:dissat} we focus on the dissatisfaction
 score instead of the $\bordacc$-score of a committee.  A decisive
 construction property will be that the dissatisfaction score of a
 winning committee will be at most $X:=2qr$.  We form the following
 voter groups:
 \begin{enumerate}
 \item The \emph{bar group} contains $X+1$~\emph{bar voters}, each with
   preference order:
    \begin{align*}
      b \succ C \setminus \{b\}.
    \end{align*}
    That is, every bar voter prefers~$b$ over all other
    candidates.

  \item The \emph{edge group} contains two voters for each
    edge~$e=\{x,y\}$ with the following preference orders:
    \begin{align*}
      c(x) \succ c(y) \succ D(e) \succ b \succ \dots, \\ 
      c(y) \succ c(x) \succ {D(e)} \succ b \succ \dots, 
    \end{align*}
    where the candidates behind~$b$ are ranked arbitrarily.
  \item The \emph{filler group} contains $2r$ voters for each filler
    candidate~$f_i$. The voters associated with candidate $f_i$ have
    the following preference orders:
    \begin{align*}
      f_i \succ b \succ \dots,
    \end{align*}
    where the candidates behind~$b$ are ranked arbitrarily. Altogether, there
    are $X = 2qr$ voters in the filler group.
 \end{enumerate}
 This completes the construction. We see that it can be computed in
 polynomial time.

 \paragraph{Correctness.}
 Let us now analyze the properties of the constructed election.
 First, we note that every winning committee must contain candidate~$b$. In
 particular, if a committee does not contain~$b$, then its dissatisfaction score
 is at least $X+1$ due to the bar voters. Second, the
 committee~$\{b\} \cup c(V(G))$ has score $X$ (the bar voters and the edge
 voters provide dissatisfaction score $0$, and
 each of the $X$ filler voters provides dissatisfaction score
 $1$). Thus no committee with score greater than $X$ is winning (and
 this includes all the committees that do not include $b$).

 

 We are now ready to show the correctness of the reduction which is
 done via the following claim.

 \begin{claim} \label{thetap2-winning-committee-structure} Let $S$~be
   a $\bordacc$-winning committee for the above-described election.
   Then $S$~must be of the form~$\{b\} \cup c(V') \cup F'$, where
   $V'$~is a minimum size vertex cover for~$G$ and $F'$ is a set of
   $q-|V'|$~arbitrary filler candidates.  Moreover, $S$~has a
   dissatisfaction score of $2r + |V'|\cdot (2r-d)$.
 \end{claim}

   To prove the claim, let~$S$ be some $\bordacc$-winning committee. Let us consider some
   edge $e = \{v,u\}$; we note that $S$ does not contain any of the
   edge candidates from $D(e)$. On the one hand, if $S$ already
   contained contains~$c(v)$ or $c(u)$, then replacing one of the
   edge-$e$ candidates with some arbitrary filler candidate would give
   a committee with a smaller dissatisfaction score.  On the other
   hand, if $S$~did not contain either of~$c(v)$ or $c(u)$, then
   replacing an $e$-edge candidate with~$c(v)$ or with~$c(u)$ would
   give a committee with a smaller dissatisfaction score.

   Further, we note that $S$ must include some set~$c(V')$ of candidates, where
   $V'$ is a vertex cover of $G$. Otherwise there would
   be some edge $e = \{u,v\}$, whose associated voters would provide
   dissatisfaction score at least $(2qr)^4 > X$ ($c(u)$ and $c(v)$
   would not be in the committee because it did not contain a vertex
   cover and the edge candidates would not be included by the
   reasoning from the previous paragraph).

   As a consequence of the above reasoning (and of the fact that $b$
   belongs to every winning committee), we see that $S$ is of the form
   $\{b\} \cup c(V') \cup F'$, $V'$ is a vertex cover, and $F'$ is an
   arbitrary subset of $q-|V'|$ filler candidates (note that the
   dissatisfaction score of the committee depends on the number of the
   filler candidates, but not on their identities).  Let us now
   compute the dissatisfaction score of such an~$S$.

   First, there is no dissatisfaction from the voters in the bar
   group.  To see the dissatisfaction from the voters in the edge
   group, note that for each edge~$e$ the two corresponding voters
   either contribute dissatisfaction score~$1$ (when exactly one
   endpoint of~$e$ is in~$V'$) or they contribute dissatisfaction
   score~$0$ (when both endpoints of~$e$ are in~$V'$).  A vertex cover
   of size~$|V'|$ is incident to edges exactly~$|V'|\cdot d$
   times. Since a vertex cover is incident to each of the~$r$ edges at
   least once, it holds that it is incident to
   $|V'|\cdot d - r$~distinct edges exactly two times. Thus,
   $|V'|\cdot d - r$~distinct edges have both endpoints in~$V'$ and
   $r - (|V'|\cdot d - r) = 2r - |V'|\cdot d$ edges have only one
   endpoint in~$V'$. Thus the voters in the edge voter group
   contribute dissatisfaction score $2r - |V'|\cdot d$.  The voters in
   the filler group, by definition, contribute
   $(q-|F'|) \cdot 2r=|V'|\cdot 2r$ to the dissatisfaction score.  In
   total, the dissatisfaction score of our winning committee~$S$ is:
   \begin{align*} \label{dissat-formula}
    2r - |V'|\cdot d + |V'|\cdot2r \;=\; 2r + |V'|\cdot (2r - d).
   \end{align*}
   Based on this formula, we see that the vertex cover $V'$ induced by
   committee $S$ must have the smallest cardinality, because this
   leads to the lowest dissatisfaction score of $S$ (as $2r > d$).
   This completes the proof of the claim.
   

   By a reasoning analogous to that from the proof
   of~Claim~\ref{thetap2-winning-committee-structure}, we see that if $V'$ is a
   minimum-size vertex cover for $G$, then every committee of the form $\{b\}
   \cup c(V') \cup F'$, where $F'$ includes $q-|V'|$ arbitrary filler
   candidates, is winning in our election. This completes the proof
   of~Theorem~\ref{thm:bccmember}.

\end{document}